\newtheorem{theorem}{Theorem}
\newtheorem{corollary}[theorem]{Corollary}
\newtheorem{definition}{Definition}
\newtheorem{observation}[theorem]{Observation}
\newtheorem{proposition}[theorem]{Proposition}
\begin{document}
%
\title{Distance-based Equilibria in Normal-Form Games}
    \author{Erman Acar\\
     Vrije Universiteit Amsterdam\\
     Amsterdam, The Netherlands \\
    erman.acar@vu.nl
\And
   Reshef Meir  \\
     Technion - Israel Institute of Technology\\
      Haifa, Israel \\
   reshefm@ie.technion.ac.il\\
    }
\maketitle
\begin{abstract}
We propose a simple uncertainty modification for the agent model in normal-form games; at any given strategy profile, the agent can access only a set of ``possible profiles'' that are within a certain distance from  the actual action profile.  We investigate the various instantiations in which the agent chooses her strategy using well-known rationales e.g., considering the worst case, or trying to minimize the regret, to cope with such uncertainty.  Any such modification in the behavioral model naturally induces a corresponding notion of equilibrium; a \emph{distance-based equilibrium}. We characterize the relationships between the various equilibria, and also their connections to well-known existing solution concepts such as Trembling-hand perfection. Furthermore, we deliver existence results, and show that for some class of games, such solution concepts can actually lead to better outcomes. 
\end{abstract}

\section{Introduction}

Decision making under uncertainty is a key issue both in game theory and in artificial intelligence. Whereas models of strict uncertainty, or absence of unique priors are common at the outset of those fields \cite{gilboa1989maxmin,dow1994nash,halpern2017reasoning,gilboa2008probability,potyka2016group},
 probabilistic and Bayesian models still have become dominant, albeit due to different reasons. In AI, the use of probabilities often leads to better performance in a wide variety of tasks (e.g. Bayesian networks for debugging and information retrieval~\cite{heckerman1995real}, Monte-Carlo methods for robot localization~\cite{thrun2001robust}, and many more). In game theory, probabilities are used first and foremost because they allow for clean modeling, and in particular the use of von Neumann-Morgenstern utilities with all the rich theory that they support. 

Several solution concepts suggested in the behavioral game theory literature tackled the problem of imperfect rationality (e.g., Cognitive hierarchy~\cite{camerer2004cognitive}, Quantal response~\cite{mckelvey1995quantal}, Trembling-hand perfect equilibria~\cite{selten75} and others), yet many of these still assume that agents optimize or approximate their expected utility over some distribution, which is not very cognitively plausible. For example, following a Trembling-hand perfect strategy, which appears to be the foremost central notion of  equilibrium refinements,  can be claimed to be a ``super-rational'' behavior rather than bounded rational~\cite{aumann1997rationality}. Recently, Trembling-hand perfect equilibrium regained strong attention in machine learning research as well  \cite{farina2018trembling,farina2018practical}. 

In this work, we suggest a model for an agent inspired by early work in AI on uncertainty and reasoning (see the work by Halpern \cite{halpern2017reasoning}), and some recent works on specific games (such as voting~\cite{conitzer2011dominating,MLR14,LMOPAAAI19} and  routing~\cite{MP15}), and look at its very foundations; normal-form games, from the lenses of this model. Our model is \emph{distance-based}, in the sense that at any given action profile, a set of ``possible profiles'' (that are close to the actual profile) is constructed w.r.t. a metric,  without assigning them any particular single probability. Then, the agent determines her action using one of the many available rationales for decision making under strict uncertainty, e.g., considering the worst case~\cite{wald1939contributions} or trying to minimize regret~\cite{savage1951theory,hyafil2004regret}. 

The intuition of such setting is  due to  imprecision caused by limitations in observations (of the agent), and the "closeness" of signals (imposed by the environment) which causes perceptual indistinguishability that comes with it.~\footnote{For instance, see \emph{random error} \cite{cohen1998introduction}. Note that these discussions also took place in philosophy and epistemology e.g., "distant trees" argument  \cite{williamson1992inexact}.} Intuitively, to capture such notions formally, one can employ a distance-based model.  Note also that such model can be considered as a bounded rationality model, as it limits the agent's reasoning about the other agents' strategies. 

It is flexible in the sense that potentially different distance metrics and decision rules can be plugged into the model, to fit specific games or types of behavior. Once we fix our behavioral model though, it naturally induces a notion of \emph{equilibrium}, which is an action profile where no agent is inclined to change her action. 

In \emph{biased games} \cite{caragiannis2014biased}, a subclass of \emph{penalty games} \cite{deligkas2016lipschitz}), players are equipped with a non-linear utility function; this is due to an additional bias term (or \emph{penalty} occurring in the utility function. The bias term itself is a real-valued function defined on the distance (via an $L_p$ norm) between the played strategy and a \emph{base strategy} (a particular strategy e.g., represents a social norm). These particular features and the results follow, stand orthogonal to our work.

Some other related works (and the references therein) that are worth mentioning: The authors in  \cite{marinacci2000ambiguous} makes use of Choquet expected utility model based on nonadditive probabilities \cite{gilboa1989maxmin}. Their pessimistic/optimistic choices \cite{marinacci2000ambiguous} share similar intuition with our worst-case/best-case responses. The notion of local rationalizibility by K. Apt \cite{apt2007many} seems related to our \emph{local-best response}; yet there it is enough for a strategy to be a best-response to a single strategy in beliefs, whereas in our case it has to be the one optimal w.r.t the whole belief set. Aghassi and Bertsimas in their work \cite{aghassi2006robust} uses robust optimization (hence worst-case scenario) to model uncertainty in payoffs. None of those works, however, uses the distance as the basic machinery, and different technical subtleties and challenges apply.
\paragraph{Contribution and paper structure}
After giving basics and the familiar equilibrium notions in the next section, we introduce our model and explore the interlinks between different variations of it. We also explore its relation to other major refinements, among others the aforementioned notion of Trembling-hand perfection. Most of our results are not metric-specific, yet in examples and some results, due to its wide-spread use and intuition, we adopt Euclidean metric. Then we demonstrate how these solution concepts apply to several common games of interest,  and provide with existence results for our notion. To underline its benefit, we introduce a class of games such that these notions potentially guarantee better outcomes. And very much in connection with that, as our final contribution, we provide a result which gives a price of anarchy bound in terms of our notion. Conclusion and future work closes the paper.

\section{Preliminaries and Notation}\label{prem}  
We define $n$-player \emph{normal-form game} $\mathcal{G}=(N, A, u)$, where
 $N$ is a finite set of $n$ players, indexed by $i$;
$A = A_1 \times \ldots \times A_n$, where $A_i$ is a finite set of \emph{actions} (or \emph{pure strategies}) available to player $i$. 

Each vector
$a = (a_1,\ldots, a_n) \in A$ is called an action profile;
$u = (u_1,\ldots,u_n)$ where $u_i : A \to \mathbb{R}$ is a real-valued \emph{utility  function} (or \emph{payoff function}) for player $i$.
A mixed strategy $\pi_i$ for player $i$ is a probability distribution over the set of available actions $A_i$ for player $i$. Further, we denote the set of \emph{mixed-strategies} for player $i$ by $\Pi_i$, which is the set of all probability distributions over the set $A_i$ of actions for player $i$. The set of mixed-strategy profiles is simply the Cartesian product of the individual mixed-strategy sets i.e., $ \Pi =  \Pi_1 \times \ldots \times \Pi_n$. We denote a (mixed-strategy) profile by $\pi \in \Pi$. Further, for a player $i$, we denote the probability that an action $a_i$ is played under mixed strategy $\pi_i$, by $\pi_i(a_i)$. The \emph{support} of a mixed strategy $\pi_i$ for a player $i$ is the set of pure strategies $\{a_i |\pi_i (a_i) > 0\}$. For a player $i$, a mixed-strategy $\pi_i$ is \emph{totally} (or \emph{completely}) \emph{mixed} if its support subsumes $A_i$. A strategy profile $\pi$ is totally mixed if its every component is totally mixed.

For simplicity, we overload the function symbol $u_i$ to define the (expected) utility $u_i$ of a strategy  profile $\pi$ for player $i$ in a normal-form game as $u_i(\pi)=\sum_{a \in A} u_i(a) \prod_{j \in N} \pi_j(a_j)$.
\footnote{Note that large $\prod$ in this expression stands for product (instead of $\Pi$, the set of mixed profiles).} 
A (mixed) strategy $\pi_i$ is a \emph{best response} to $\pi_{-i}$ if $u_i(\pi_i,  \pi_{-i}) \geq u_i(\pi'_i,  \pi_{-i})$ for every $\pi'_i \in \Pi_i$. A (mixed) strategy profile is a \emph{Nash equilibrium} ($MN$) if for every player $i \in N$, $\pi_i$ is a best-response to $\pi_{-i}$. A pure strategy Nash equilibrium ($PN$) is a $MN$ where every player's strategy has a support of cardinality 1. A \emph{totally mixed Nash equilibrium} is denoted by $TMN$. Given a profile $\pi$, \emph{Social Welfare} $SW(\pi)= \sum_{i \in N} u_i(\pi)$. And finally, \emph{Price of Anarchy} PoA for a game is defined as the ratio of the maximum social welfare (numerator) to the minimum social welfare in an equilibrium (denominator).

 \subsection{Equilibria with Mistakes and Imprecision}\label{approximateequilibria}
We mention definitions of several well-known equilibrium concepts involved with \emph{slight mistakes} or \emph{imprecision} of agents, from  the literature.  These are \emph{Trembling-Hand Perfect Equilibrium}  ($T$) from R. Selten's seminal work ~\cite{selten75}, its stronger version, \emph{Truly  Perfect Equilibrium} ($TP$) \cite{kohlberg1981some},
and \emph{Robust equilibrium} ($R$) \cite{MP05}. These concepts are of particular importance since we shall reveal their connections to the distance-based equilibrium concepts, introduced later in the next section.  

\begin{definition}[Trembling-Hand Perfect Equilibrium~\cite{selten75}]\label{trembling-def}
Given a finite game $\mathcal{G}$, a mixed strategy profile $\pi$ is \emph{Trembling-hand perfect equilibrium} if there is a sequence $\{\pi^k\}^\infty_{k=0}$ of totally mixed strategy profiles which converges to $\pi$ such that for each agent $i \in N$, $\pi_i$ is a best response to $\pi^k _{-i}$ for all $k$. 
\end{definition}

Selten's notion of Trembling-hand perfect equilibrium is based on the notion of best response which is robust against minimal mistakes (hence the term \emph{trembling hand}) of opponents, formalized by a sequence of profiles converging to the equilibrium. As it was shown by Selten \cite{selten75}, every finite game has a $T$-equilibrium. Note that the notion does not demand a best-response to every such sequence but rather only one. We shall later show that this very notion is entangled to our notions of distance-based equilibria. So is the Truly Trembling-hand perfect equilibrium, a stronger variant, as we mention next.\footnote{Another similar variation \cite{okada1981stability} that aims to strengthen Selten's Trembling-Hand is given by Okada.}

\begin{definition}[Truly perfect equilibrium~\cite{kohlberg1981some}]\label{truly-trembling-def}
Given a finite game $\mathcal{G}$, a mixed strategy profile $\pi$ is \emph{truly perfect equilibrium} if for each sequence $\{\pi^k\}^\infty_{k=0}$ of totally mixed strategy profiles which converge to $\pi$, there is a $K$ such that $\pi_i$ is a best response to $\pi^k_{-i}$ for all $k\geq K$. 
\end{definition}

It is easy to see that this notion demands a lot by requiring each action to be a best response in every sequence of profile converging to the equilibrium. There is a cost for this demand, that is, $TP$ does not  always exist \cite{kohlberg1981some} (see also Chapter 11 of \cite{Fudenberg91}).

We also define an even stronger variation; namely, \emph{Strict-$TP$} (s-$TP$).

\begin{definition}[Strict-$TP$]\label{def:strict_TP}
A strategy profile is a \emph{Strict-$TP$} if it is defined as in Def.~\ref{truly-trembling-def} except that every $\pi_i$ is  strictly better than any other response to $\pi^k_{-i}$.
\end{definition}
We also provide a slightly stronger extension of a Selten's original Trembling-hand perfect equilibrium; namely \emph{strict-$T$} (s-$T$):
\begin{definition}[Strict-$T$]
A strategy profile is a \emph{Strict-$T$} if it is defined as in  Def.~\ref{trembling-def} except that for every $i\in N$, there is an infinite subsequence of $\{\pi^k\}_{k=0}^\infty$ where $\pi_i$ is a strict best-response to $\pi^k_{-i}$.
\end{definition}
Note that a strict-$T$ has to be a pure Nash equilibrium, since a strict best-response cannot be mixed. 

The following notion of Robust equilibrium is an adaption from robust political equilibrium \cite{MP05}.\footnote{\cite{MP05} deal with both coalitional stability and noisy actions, assuming that each player ``fails'' to play with some small probability. We only focus on the latter part. Our definition of Robust equilibrium is based on their informal description and motivation.} Intuitively, an equilibrium is $\epsilon$-Robust, if each player would like to keep her action, even if there is a small chance that other players deviate.

\begin{definition}[Robust Equilibrium~\cite{MP05}]\label{robusteq} 
 A mixed profile $\pi$ is an $\epsilon$-noisy variant of a pure profile $a$, if  for all $j\in N$, $\pi_j(a_j)>1-\epsilon$ where $\epsilon >0$.
 \begin{itemize}
    \item  Given a pure strategy profile $a$, player $i$, and $\epsilon>0$, action $b_i$ is an \emph{$\epsilon$-Robust response} if $b_i$ is a best response to any $\epsilon$-noisy variant of $a_{-i}$.
    \item A pure strategy profile $a$ is an $\epsilon$-Robust equilibrium if every $a_i$ is an $\epsilon$-Robust response to $a_{-i}$.
\end{itemize}
\end{definition}
Next, we provide a link between those two concepts.
\begin{proposition}

If $a$ is an $\epsilon$-Robust equilibrium for some $\epsilon>0$, then $a$ is a TP.
\end{proposition}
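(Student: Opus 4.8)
The plan is to check the definition of truly perfect equilibrium directly for the pure profile $a$ (viewed as the degenerate mixed profile placing all mass on each $a_j$). So fix an $\epsilon$-Robust equilibrium $a$, take an arbitrary sequence $\{\pi^k\}_{k=0}^{\infty}$ of totally mixed profiles with $\pi^k \to a$, and produce the threshold $K$ required by Definition~\ref{truly-trembling-def}.

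The first step uses convergence. Since $a$, regarded as a mixed profile, assigns probability $1$ to $a_j$ for each $j$, the convergence $\pi^k \to a$ (in the usual topology on $\Pi$) forces $\pi^k_j(a_j)\to 1$ for every $j\in N$. As $N$ is finite, one $K$ serves all players simultaneously: there is $K$ with $\pi^k_j(a_j)>1-\epsilon$ for all $k\geq K$ and all $j\in N$.

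The second step connects this to robustness. Fix a player $i$ and any $k\geq K$. Then $\pi^k_{-i}$ satisfies $\pi^k_j(a_j)>1-\epsilon$ for every $j\neq i$, so $\pi^k_{-i}$ is an $\epsilon$-noisy variant of $a_{-i}$ in the sense of Definition~\ref{robusteq}. Since $a$ is an $\epsilon$-Robust equilibrium, $a_i$ is an $\epsilon$-Robust response to $a_{-i}$, hence a best response among pure actions to $\pi^k_{-i}$; and a pure action maximizing $u_i(\cdot,\pi^k_{-i})$ over $A_i$ also maximizes expected utility over all of $\Pi_i$, so the degenerate strategy on $a_i$ is a best response to $\pi^k_{-i}$. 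As this holds for every $i$ and every $k\geq K$, the truly perfect equilibrium condition is met.

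The argument is essentially a one-liner once the definitions are unpacked; the only points needing care are extracting a single $K$ that works for every player at once (immediate from $|N|<\infty$) and matching the reading of ``$\epsilon$-noisy variant of $a_{-i}$'' so that the totally mixed profiles $\pi^k_{-i}$ genuinely fall under the universal quantifier ``any $\epsilon$-noisy variant'' in Definition~\ref{robusteq}. I expect the main (minor) obstacle to be exactly this bookkeeping about quantifier scope, together with the remark that ``best response'' in Definition~\ref{truly-trembling-def} should be read in the mixed sense, which the convexity observation settles.
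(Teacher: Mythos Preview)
Your proof is correct and follows essentially the same route as the paper: use convergence to find a single $K$ beyond which every $\pi^k$ is an $\epsilon$-noisy variant of $a$, then invoke $\epsilon$-robustness to conclude each $a_i$ is a best response. Your version is slightly more careful (noting why a pure best response is also a mixed best response), but the argument is the same.
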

\begin{proof}
Let $a$ be some $\epsilon$-Robust equilibrium for some $\epsilon>0$, and consider a sequence $\{\pi^k\}_{k=0}^\infty$ converging to $a$. Thus $\pi_i^k(a_i)\rightarrow 1$ for all $i\in N$. In particular, there is some $K_i$ such that for all $k>K_i$, $\pi_i^k(a_i)>1-\epsilon$. Let $K:=\max_i K_i$. Then for all $k>K$, and for all $j\in N$, we have that $\pi^k_j(a_j)>1-\epsilon$, i.e. $\pi^k$ is an $\epsilon$-noisy variant of $a$, and thus $a_{-i}$ is a best response to $\pi^k_{-i}$.
\end{proof}

\section{Distance-based Equilibria}

In the following, we define the central notions of the paper. 
\subsection{Distance-based uncertainty}

For every agent $i \in N$,  let $r_i \in \mathbb{R}^+$  be the associated \emph{ignorance factor} formalizing the intuition: Greater the ignorance factor, more ignorant/cautious the agent (about the mixed strategies of other agents).  

Given a mixed strategy profile $\pi = \langle \pi_i, \pi_{-i}\rangle$, $\mathcal{B}_i(\pi, r_i):= \{\pi'_{-i} \mid d(\pi_{-i}, \pi'_{-i}) \leq r_i \}$ is the set of \emph{possible response profiles} of others that the agent $i$ is considering, equipped with a metric $d$ which is assumed to have the axioms of non-negativity i.e., $d(x, y) \geq 0$; identity of indiscernibles i.e., $d(x, y)=0 \iff x = y$; symmetry i.e., $d(x, y) = d(y, x)$; and triangle inequality i.e.,  $d(x, z) \leq d(x, y) + d(y, z)$  for all $x, y, z \in \Pi$. Note that although our results comply with any compact space with a metric which fulfills those axioms, in the examples throughout the paper we work with Euclidean metric for convenience (due to its wide-spread use).  Often, we will use the shorthand notation $\mathcal{B}_i(\pi)$ ($\mathcal{B}$ for ball) whenever $r_i$ is clear from the context.  

Intuitively, $\mathcal{B}_i(\pi)$ is the set that captures $i$'s \emph{belief} or \emph{subjective uncertainty} about other agents' strategies in a strategy profile $\pi$, hence we will call it often \emph{belief set}. Yet another description is that instead of writing a belief as a distribution over profiles, the belief of agent $i$ is written as a \emph{point estimate} $\pi_{-i}$ plus an uncertainty parameter $r_i$ (which together induce a set $\mathcal B_i$). The use of a ball to capture uncertainty is both motivated by its formal simplicity, and the degree of freedom it provides which is set aside from any obvious domain specific/context-dependent constraints.

Notice that as $r_i$ approaches 0, $i$ becomes almost sure about other agents' strategies, and thus $\mathcal B_i$ becomes $\{\pi_{-i}\}$. Noteworthy is that for two-player games $r$ reduces to the distance between two probability distributions. For more than two players, any distance on probabilities induces a natural metric $d$ on uncorrelated profiles where for each $j\neq i$ we consider all $\pi'_j$ close to $\pi_j$.

\subsection{Local responses}
Let $\Pi_i$ denote the set of all strategies available to $i$. First, we introduce some notions of best response.

\begin{definition}\label{def:LD} $\pi'_i$ \emph{locally dominates} $\pi_i$ in the set $\Pi^*_{-i} \subseteq \Pi_{-i}$ if: (a) for all $\pi_{-i}\in \Pi^*_{-i}$, $u_i( \pi'_i, \pi_{-i})\geq u_i(\pi_i, \pi_{-i})$; and (b) there exists $\pi'_{-i}\in~\Pi^*_{-i}$ such that $u_i(\pi'_i, \pi'_{-i})> u_i(\pi_i, \pi'_{-i})$.  
$\pi'_i$ \emph{strictly locally dominates}   $\pi_i$ in $\Pi^*_{-i}$ if (a) holds with strict inequality. 
\end{definition}
Note that when $\Pi^*_{-i}=\Pi_{-i}$, local dominance and strict local dominance boil down to weak and strict strategic dominance, respectively~\cite{SBmultiagent}.

Given a mixed strategy profile $\pi$, for each $i \in N$ with $r_i$, a strategy $\pi_i$ is a distance-based  
\begin{itemize}
\item[$\blacktriangleright$] \textbf{(W)orst-case} best response (or \emph{maximin})  if \\ $\pi_i = \arg\max_{\pi'_i \in \Pi_i} \{ \min(u_i(\pi'_i, \pi_{-i})) \mid \pi_{-i} \in \mathcal{B}_i(\pi)\}$.

\item[$\blacktriangleright$] \textbf{(B)est-case} best response (or \emph{maximax})  if \\$\pi_i = \arg\max_{\pi'_i \in \Pi_i} \{ \max(u_i(\pi'_i, \pi_{-i})) \mid \pi_{-i} \in \mathcal{B}_i(\pi)\}$.

\item[$\blacktriangleright$] \textbf{(WR) Worst-Case Regret} best response if \\ $\pi_i=$  $\arg\min_{\pi'_i \in \Pi_i}\max \{\texttt{reg}_i(\pi'_i, \pi_{-i})\mid \pi_{-i} \in \mathcal{B}_i(\pi)\}$ where $\texttt{reg}_i(\pi_i, \pi_{-i}) = \max_{\pi'_i \in \Pi_i}(u_i(\pi'_i, \pi_{-i})) - u_i(\pi_i,\pi_{-i})$. 

\item[$\blacktriangleright$] \textbf{(U)ndominated} best response if there is no $\pi'_i$ that  locally dominates $\pi_i$ in the set $\mathcal B_i(\pi)$. 
\item[$\blacktriangleright$] \textbf{(D)ominant} best response if $\pi_i$  locally dominates all $\pi'_i$ in the set $\mathcal B_i(\pi)$.
\item[$\blacktriangleright$] \textbf{(SD) Strictly dominant} best response if $\pi_i$  strictly locally dominates all $\pi'_i$ in the set $\mathcal B_i(\pi)$.
\end{itemize}

By the following proposition, we characterize the relations between these notions.  We make no assumption on the metric since it only uses single sets of possible strategy profiles (of opponents), a.k.a. belief sets. In other words, these relations are independent from the choice of the metric.  
\begin{theorem}\label{responsesproposition} Given any ignorance factor $r$, the following statements hold: 
\begin{enumerate}
\item[(a)]If $\pi_i$ is a $\text{SD}_r$ best response, then $\pi_i$ is a $\text{D}_r$ best response.
\item[(b)]If $\pi_i$ is a $\text{D}_r$ best response, then $\pi_i$ is a $W_r$ and $B_r$ best response. 

\item[(c)]If $\pi_i$ is a $\text{D}_r$ best response, then $\pi_i$ is a $WR_r$ best response. 

\item[(d)]If $\pi_i$ is a unique $W_r, B_r$ or $WR_r$ best response, then $\pi_i$ is a $U_r$ best response.

\end{enumerate}
\end{theorem}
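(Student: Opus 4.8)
The plan is to obtain all four implications by unwinding the definitions and then invoking one elementary fact repeatedly: taking a pointwise minimum, a pointwise maximum, or the regret $\texttt{reg}_i(\cdot,\pi_{-i})$ over the belief set $\mathcal{B}_i(\pi)$ is monotone with respect to pointwise domination of $i$'s payoff on that set. Two standing observations will be used throughout: first, $\mathcal{B}_i(\pi)$ is nonempty (it contains $\pi_{-i}$) and compact (a closed ball inside the compact set $\Pi_{-i}$), and $u_i$ is multilinear, hence continuous, so every $\min$ and $\max$ over $\mathcal{B}_i(\pi)$ occurring in the definitions of the $W_r$, $B_r$ and $WR_r$ responses is actually attained; second, ``$\pi_i$ is a $W_r$ (resp.\ $B_r$, $WR_r$) best response'' means $\pi_i$ lies in the corresponding $\arg\max$/$\arg\min$ set, which need not be a singleton unless uniqueness is explicitly assumed.

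For (a), if $\pi_i$ strictly locally dominates every $\pi'_i$ on $\mathcal{B}_i(\pi)$ then condition (a) of Def.~\ref{def:LD} holds (strict implies weak), and since $\mathcal{B}_i(\pi)\neq\emptyset$ the strict inequality at any point of $\mathcal{B}_i(\pi)$ witnesses condition (b); hence $\pi_i$ is a $\text{D}_r$ best response. For (b), assume $\pi_i$ locally dominates every $\pi'_i$, so $u_i(\pi_i,\pi_{-i})\geq u_i(\pi'_i,\pi_{-i})$ for all $\pi'_i\in\Pi_i$ and all $\pi_{-i}\in\mathcal{B}_i(\pi)$. Evaluating this at a minimizer of $u_i(\pi_i,\cdot)$ over $\mathcal{B}_i(\pi)$ gives $\min_{\pi_{-i}\in\mathcal{B}_i(\pi)}u_i(\pi_i,\pi_{-i})\geq\min_{\pi_{-i}\in\mathcal{B}_i(\pi)}u_i(\pi'_i,\pi_{-i})$ for every $\pi'_i$, so $\pi_i$ attains the maximin value and is a $W_r$ response; the symmetric argument, evaluating at a maximizer of the dominated strategy's payoff, gives the $B_r$ claim. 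For (c), the same pointwise inequality forces $\max_{\pi'_i\in\Pi_i}u_i(\pi'_i,\pi_{-i})=u_i(\pi_i,\pi_{-i})$ for every $\pi_{-i}\in\mathcal{B}_i(\pi)$, i.e.\ $\texttt{reg}_i(\pi_i,\pi_{-i})=0$ on all of $\mathcal{B}_i(\pi)$; since regret is nonnegative, the worst-case regret of $\pi_i$ equals $0$, the smallest attainable value, so $\pi_i$ is a $WR_r$ best response.

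For (d), I would argue by contraposition. Suppose some $\pi'_i$ locally dominates $\pi_i$ in $\mathcal{B}_i(\pi)$. Condition (b) of Def.~\ref{def:LD} then forces $\pi'_i\neq\pi_i$, while condition (a) gives $u_i(\pi'_i,\pi_{-i})\geq u_i(\pi_i,\pi_{-i})$ for all $\pi_{-i}\in\mathcal{B}_i(\pi)$. Monotonicity of $\min$, of $\max$, and of $\texttt{reg}_i(\cdot,\pi_{-i})$ in its first argument's payoff then yields $\min_{\pi_{-i}}u_i(\pi'_i,\pi_{-i})\geq\min_{\pi_{-i}}u_i(\pi_i,\pi_{-i})$, $\max_{\pi_{-i}}u_i(\pi'_i,\pi_{-i})\geq\max_{\pi_{-i}}u_i(\pi_i,\pi_{-i})$, and $\max_{\pi_{-i}}\texttt{reg}_i(\pi'_i,\pi_{-i})\leq\max_{\pi_{-i}}\texttt{reg}_i(\pi_i,\pi_{-i})$; hence, whichever of the three criteria $\pi_i$ optimizes, $\pi'_i$ does at least as well and is therefore also an optimizer. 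This exhibits a second, distinct optimizer, contradicting the assumed uniqueness of $\pi_i$ as a $W_r$, $B_r$, or $WR_r$ best response, so $\pi_i$ must be a $U_r$ best response.

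I expect no genuine obstacle here, only some bookkeeping: keeping straight when an $\arg\max$/$\arg\min$ is a set rather than a point (so that (b) and (c) do not over-claim uniqueness, while (d) crucially exploits it), and noting that only condition (a) of local dominance drives the payoff inequalities in (a)--(c), whereas condition (b) enters in (d) solely to guarantee $\pi'_i\neq\pi_i$. I would also state the compactness of $\mathcal{B}_i(\pi)$ and continuity of $u_i$ explicitly once at the outset, since the theorem is phrased for an arbitrary compact metric space and the attainment of the extrema is what makes the ``evaluate at the minimizer/maximizer'' steps legitimate.
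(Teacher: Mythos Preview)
Your proof is correct and follows essentially the same route as the paper's: unwinding Definition~\ref{def:LD} for~(a), passing pointwise domination through $\min$/$\max$ for~(b), noting that a dominant response has zero regret for~(c), and arguing by contraposition/uniqueness for~(d). Your version is in fact more careful than the paper's in two respects---you explicitly justify attainment of the extrema via compactness of $\mathcal{B}_i(\pi)$ and continuity of $u_i$, and you keep track of the $\arg\max$/$\arg\min$ as sets---but the underlying argument is identical.
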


\begin{proof} 
(a) Easily follows by Definition~\ref{def:LD}.

(b) Consider any action $\pi'_i\neq \pi_i$. Since $u_i(\pi_i,\pi_{-i})\geq u_i(\pi'_i,\pi_{-i})$ for any state $\pi_{-i}\in \mathcal B_i(\pi)$, this holds in particular for the states with maximal utility and minimal utility.

 (c) If $\pi_i$ is a $D_r$ best response, then $\texttt{reg}_i(\pi_i, \pi_{-i}) = \max_{\pi'_i \in \Pi_i}(u_i(\pi'_i, \pi_{-i}) - u_i(\pi_i,\pi_{-i})) \leq \max_{\pi'_i \in \Pi_i}(u_i(\pi_i, \pi_{-i}) - u_i(\pi_i,\pi_{-i}))=0$ for all $\pi_{-i}\in\mathcal B_i(\pi)$. The regret of any other action $\pi'_i$ can only be higher, thus $\pi'_i$ is a $WR_r$ response.
 
(d) Suppose that $\pi_i$ is a $B_r$ best response. If $\pi_i$ is not a $U_r$ response, then there is an action $\pi'_i\neq \pi_i$  that locally dominates $\pi_i$. In particular, $u_i(\pi'_i,\pi^*_{-i})\geq u_i(\pi_i,\pi^*_{-i})$ in the best state $\pi^*_{-i}$, which means that $\pi'_i$ is also a $B_r$ best  response. Note that uniqueness is a necessary condition, otherwise, we can consider two actions $\pi_i,\pi'_i$ that have the same utility in the best case, but one of them dominates the other. The proof for $W_r$ and $WR_r$ is similar. 
\end{proof}

The following result also holds for any metric since it only uses containment.
\begin{proposition}
If $r'_i<r_i$ then $SD_{r_i}$ response implies $SD_{r'_i}$ response.  
\end{proposition}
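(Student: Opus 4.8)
The plan is to exploit the obvious monotonicity of the belief sets in the ignorance factor. First I would note that, directly from the definition $\mathcal{B}_i(\pi, r):= \{\pi'_{-i} \mid d(\pi_{-i}, \pi'_{-i}) \leq r\}$, the hypothesis $r'_i < r_i$ yields the inclusion $\mathcal{B}_i(\pi, r'_i) \subseteq \mathcal{B}_i(\pi, r_i)$: any $\pi'_{-i}$ with $d(\pi_{-i}, \pi'_{-i}) \leq r'_i$ also satisfies $d(\pi_{-i}, \pi'_{-i}) \leq r_i$. This uses nothing about $d$ beyond its being a real-valued function, which is exactly why the statement is announced as metric-independent.

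Next I would unfold the definition of an $SD_{r_i}$ best response: $\pi_i$ strictly locally dominates every $\pi'_i \in \Pi_i$ in the set $\mathcal{B}_i(\pi, r_i)$, which by the strict case of Definition~\ref{def:LD} means $u_i(\pi_i, \pi_{-i}) > u_i(\pi'_i, \pi_{-i})$ for all $\pi_{-i} \in \mathcal{B}_i(\pi, r_i)$ and all $\pi'_i \neq \pi_i$. Since $\mathcal{B}_i(\pi, r'_i) \subseteq \mathcal{B}_i(\pi, r_i)$, each of these strict inequalities holds in particular for every $\pi_{-i} \in \mathcal{B}_i(\pi, r'_i)$. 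Hence $\pi_i$ strictly locally dominates every $\pi'_i$ in the smaller belief set as well, i.e. $\pi_i$ is an $SD_{r'_i}$ best response.

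The only point requiring a moment's care is clause (b) of Definition~\ref{def:LD} (existence of a profile where the inequality is strict): for \emph{strict} local dominance only clause (a) is demanded, and in any case clause (b) is automatically entailed by the strict form of (a) whenever the underlying set is nonempty, which $\mathcal{B}_i(\pi, r'_i)$ is (it contains $\pi_{-i}$ itself). So I expect no real obstacle here — the substance of the proposition is simply the containment of balls together with the trivial fact that strict domination on a set is inherited by all of its nonempty subsets.
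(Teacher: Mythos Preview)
Your argument is correct and follows exactly the same route as the paper: the containment $\mathcal{B}_i(\pi, r'_i) \subseteq \mathcal{B}_i(\pi, r_i)$ immediately transfers the strict inequality in clause~(a) of Definition~\ref{def:LD} to the smaller ball. The paper's proof is a one-liner to this effect, and your additional remark about clause~(b) being automatic (or irrelevant for strict dominance) is a harmless elaboration.
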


\begin{proof}
Since $\mathcal B(\pi,r'_i) \subseteq \mathcal B(\pi,r_i)$, condition (a) of Def.~\ref{def:LD} must hold in all states.
\end{proof}

This does not hold in any of the other variations. To see this, consider $W$-equilibrium, since for any response $\pi_i$, $\min\{u_i(\pi_i, \pi_{-i}) \mid \pi_{-i} \in \mathcal{B}_i(\pi, r)\} \leq \min\{u_i(\pi_i, \pi_{-i} \mid \mathcal{B}_i(\pi, r^*)\})\}$ whenever  $\mathcal{B}_i(\pi, r^*) \subseteq \mathcal{B}_i(\pi, r)$. The other variations are similar.

\subsection{Equilibrium}

Now, we are ready to define the notion of \textit{distance-based}  equilibrium. Let $\textbf{r} := (r_1, \ldots, r_n)$ be the \emph{ignorance vector} which stores ignorance factor for each agent $i \in N$.  Assume that $\star \in \{W, B, WR, U, D, SD\}$. Then,  $\pi$ is called a \textbf{distance-based \mbox{$\star_\textbf{r}$-equilibrium}}  if for every agent $i$, whose belief set $\mathcal{B}_i(\pi)$ is defined w.r.t. $r_i$ where $r_i = \textbf{r}_i$,  $\pi_i$ is a $\star$-best response. When all the agents have the same $r$, we will use $r$ instead of $\textbf{r}$ as a subscript, or totally omit it whenever it is clear from the context.  Similar solution concepts to $U_r,W_r,WR_r$ (in specific games) are studied in works \cite{MLR14,MP15}.

\begin{observation}
For all definitions above, if $r_i=0$ for all $i\in N$, then a $\star$-equilibrium is a Nash equilibrium. 
\end{observation}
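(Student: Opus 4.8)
The plan is to first pin down what the belief set looks like when $r_i=0$, and then read off each of the six local-response conditions in that degenerate case. I would begin by observing that $\mathcal{B}_i(\pi,0)=\{\pi_{-i}\}$: by non-negativity $d(\pi_{-i},\pi'_{-i})\geq 0$, so the constraint $d(\pi_{-i},\pi'_{-i})\leq 0$ forces $d(\pi_{-i},\pi'_{-i})=0$, and identity of indiscernibles then yields $\pi'_{-i}=\pi_{-i}$. Hence every ``$\min$ over $\mathcal{B}_i(\pi)$'' or ``$\max$ over $\mathcal{B}_i(\pi)$'' appearing in the definitions of $W$, $B$ and $WR$ is an optimization over a one-point set, and therefore just the value at $\pi_{-i}$ itself.

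Next I would handle the six cases. For $W$ and $B$, the condition $\pi_i=\arg\max_{\pi'_i\in\Pi_i} u_i(\pi'_i,\pi_{-i})$ is literally the best-response condition. For $WR$, we have $\texttt{reg}_i(\pi'_i,\pi_{-i})=\max_{\pi''_i\in\Pi_i}u_i(\pi''_i,\pi_{-i})-u_i(\pi'_i,\pi_{-i})$, whose first term does not depend on $\pi'_i$, so minimizing the (worst-case, but now trivial) regret over $\pi'_i$ is the same as maximizing $u_i(\pi'_i,\pi_{-i})$ --- again best response. For $U$, $D$, and $SD$ I would point out that when $\Pi^*_{-i}$ is the singleton $\{\pi_{-i}\}$, clause (b) of Definition~\ref{def:LD} collapses onto clause (a), so ``$\pi'_i$ locally dominates $\pi_i$ in $\{\pi_{-i}\}$'' reduces to ``$u_i(\pi'_i,\pi_{-i})>u_i(\pi_i,\pi_{-i})$'' (and the same for strict local dominance). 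Thus $\pi_i$ undominated means no alternative strictly beats it at $\pi_{-i}$, i.e.\ $\pi_i$ is a best response; and $\pi_i$ being a $D$ (or $SD$) best response means $\pi_i$ strictly beats every alternative at $\pi_{-i}$, which in particular makes $\pi_i$ a best response.

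Finally, since the above holds for every $i\in N$ simultaneously, each $\pi_i$ is a best response to $\pi_{-i}$, so $\pi$ is by definition a Nash equilibrium, for all six choices of $\star$. I do not anticipate any genuine obstacle here; the only place warranting a moment's care is the $D/SD$ case, where one must notice that the degenerate belief set turns ``locally dominates'' into ``strictly dominates'', so the equivalence with best response weakens to a one-directional implication --- which is exactly what the statement asserts.
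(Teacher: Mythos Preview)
Your proposal is correct. The paper states this Observation without proof, treating it as immediate; your argument unpacks exactly the right pieces (the belief set collapsing to a singleton via the metric axioms, and the case analysis for each of the six response types), and there is nothing to add or correct.
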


The statements below follow  immediately from the relations between corresponding responses (i.e., Theorem~\ref{responsesproposition}).  
\begin{corollary}\label{equilibriacorollary}  Given any $r$, the following statements hold: 
\begin{itemize}
\item[(a)] If $\pi$ is a $\text{SD}_r$-equilibrium, then $\pi$ is a  $\text{D}_r$-equilibrium.

\item[(b)]  If $\pi$ is a $\text{D}_r$-equilibrium, then $\pi$ is a $W_r, B_r$ and $WR_r$-equilibrium. 

\item[(c)] If $\pi$ is a $W_r, B_r$ or $WR_r$-equilibrium, then $\pi$ is a $U_r$-equilibrium.

\end{itemize}
\end{corollary}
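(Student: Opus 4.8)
The plan is to reduce each implication to the corresponding single‑agent statement already proved in Theorem~\ref{responsesproposition}, using nothing beyond the definition of a $\star_r$‑equilibrium. Recall that $\pi$ is a $\star_r$‑equilibrium exactly when, for every $i\in N$, the strategy $\pi_i$ is a $\star$‑best response to the belief set $\mathcal{B}_i(\pi)$ induced by $r_i$. Consequently it suffices to show, agent by agent, that ``$\pi_i$ is a $\star$‑best response'' implies ``$\pi_i$ is a $\star'$‑best response''; the equilibrium‑level claim then follows simply by quantifying over $i\in N$. So the skeleton is: fix an arbitrary $i$, apply the relevant clause of Theorem~\ref{responsesproposition} to $\pi_i$ and $\mathcal{B}_i(\pi)$, and conclude because $i$ was arbitrary.

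Carrying this out: for (a), since $\pi$ is an $SD_r$‑equilibrium, each $\pi_i$ is an $SD_r$‑best response, hence a $D_r$‑best response by Theorem~\ref{responsesproposition}(a), so $\pi$ is a $D_r$‑equilibrium. Part (b) is the same argument invoking Theorem~\ref{responsesproposition}(b) for the $W_r$ and $B_r$ conclusions and Theorem~\ref{responsesproposition}(c) for the $WR_r$ conclusion. Part (c) again lifts Theorem~\ref{responsesproposition}(d) pointwise from responses to profiles. No calculation is involved anywhere; the corollary is essentially a transcription of Theorem~\ref{responsesproposition} one quantifier level up.

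The only point requiring care — and the main obstacle — is part (c). Theorem~\ref{responsesproposition}(d) does \emph{not} say that every $W_r$, $B_r$, or $WR_r$‑best response is undominated in $\mathcal{B}_i(\pi)$; it establishes this only for a \emph{unique} such best response, and the implication genuinely fails without uniqueness (two strategies may tie in the worst case, best case, or worst‑case regret while one weakly dominates the other inside $\mathcal{B}_i(\pi)$). I would therefore make the uniqueness hypothesis explicit in the proof of (c): if $\pi$ is a $W_r$‑ (resp.\ $B_r$‑, $WR_r$‑) equilibrium in which each $\pi_i$ is the unique $W_r$‑ (resp.\ $B_r$‑, $WR_r$‑) best response to $\mathcal{B}_i(\pi)$ — in particular, in generic games — then Theorem~\ref{responsesproposition}(d) applies to every agent and $\pi$ is a $U_r$‑equilibrium. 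With that caveat stated, (a)–(c) all close immediately.
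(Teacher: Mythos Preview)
Your approach is exactly the paper's: the corollary is stated as following ``immediately from the relations between corresponding responses (i.e., Theorem~\ref{responsesproposition})'', and you simply make that lifting explicit by quantifying over agents. Parts (a) and (b) are handled identically.

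Your observation about part (c) is well taken and in fact sharper than the paper. Theorem~\ref{responsesproposition}(d) genuinely requires the $W_r$/$B_r$/$WR_r$ best response to be \emph{unique} in order to conclude it is a $U_r$ response, and the paper's own proof of (d) explicitly notes that ``uniqueness is a necessary condition.'' The corollary as stated drops this hypothesis, so strictly speaking it does not follow from Theorem~\ref{responsesproposition}(d) alone. Your proposed fix---adding the uniqueness assumption (or restricting to generic games) in part (c)---is the right way to close the gap; the paper simply does not address it.
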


A brief summary of our results is illustrated in Figure~\ref{results}.
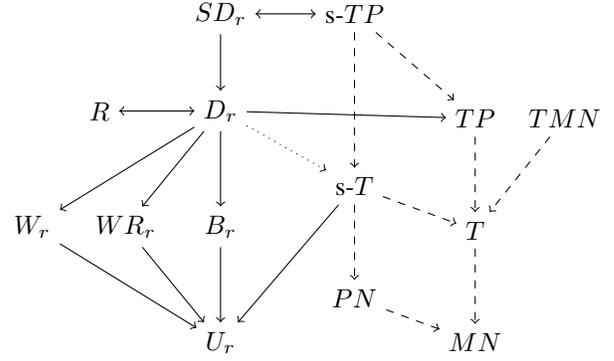
\begin{figure}

\centering

\begin{tikzpicture}[node distance=1cm]
\title{Equilibria Relations}
\node(SD)                           {$SD_r$};
\node(sTP)    [right=0.8cm of SD]                       {s-$TP$};
\node(D)       [below=0.75cm of SD] {$D_r$};
\node(TP)       [below right=0.9cm and 0.7cm of sTP] {$TP$};
\node(TMN)       [right=0.2cm of TP] {$TMN$};

\node(B)      [below=1cm of D]       {$B_r$};
\node(WR)      [left= 0.4cm of B]  {$WR_r$};
\node(W)      [left=1.8cm of B]       {$W_r$};
\node(T)      [below =1cm of TP]      {$T$};
\node(U)            [below=1cm of B]     {$U_r$};
\node(MN)            [below =1cm of T]     {$MN$};
\node(R)            [left=1 of D]     {$R$};
\node(sT)      [below=1.8cm  of sTP]       {s-$T$};
\node(PN)      [below=1cm  of sT]       {$PN$};
\draw(SD)       -- (D) [->];
\draw(D)       -- (W)[->];
\draw(D)       -- (WR)[->];
\draw(D)       -- (B)[->];
\draw(WR)       -- (U)[->];
\draw(W)       -- (U)[->];
\draw(B)       -- (U)[->];
\draw(D)       -- (TP)[->];
\draw[dashed](sTP)   -- (TP)[->];
\draw[dashed](sTP)   -- (sT)[->];
\draw[dashed](PN)   -- (MN)[->];
\draw[dashed](sT)   -- (PN)[->];
\draw[dashed](sT)   -- (T)[->];
\draw[](SD)       -- (sTP) [<->];
\draw[dotted](D)       -- (sT)[->];
\draw[](R)       -- (D)[<->];
\draw[dashed](TP)       -- (T)[->];
\draw[](sT)       -- (U)[->];
\draw[dashed](TMN) -- (T)[->];
\draw[dashed](T) -- (MN)[->];
\end{tikzpicture}
\caption{Entailment of equilibrium concepts; An arrow to $\star_r$ means that there is some $r>0$ for which the entailment holds. An arrow from $\star_r$ means the entailment holds for any $r>0$ (except for the dotted arrow which is slightly weaker, see Prop.~\ref{D_to_sT}). Dashed arrows mark entailments that are known or obvious.}

\label{results}
\end{figure}

\section{Locally Best Response and Trembling Hand}
Definitions~1-4 capture stability of equilibrium in a rigorous formal way, but require reasoning about sequences of profiles that do not seem have a clear cognitive interpretation. 

In this section we aim to get a better understanding of these concepts and of our distance-based equilibrium concepts, by exploring the connections between them. Moreover, the proposed distance-based best-response (and hence equilibrium) do have  cognitive interpretation which is the observational limitation that an agent has.

We first argue that robustness (under the appropriate Def.~1-4) implies stability  under uncertainty (under Def.~5) when the ignorance factors of all agents are sufficiently small.

\begin{observation}\label{obmaxeqmin} Given a profile  $\pi$ and an agent $i \in N$, for any mixed strategy  $\pi_i$, $\max_{\pi_{-i}\in \mathcal{B}_i(\pi)} u_i (\pi_i, \pi_{-i})$\\ $=\min_{\pi_{-i}\in \mathcal{B}_i(\pi)}  u_i(\pi_i, \pi_{-i}) =  u_i(\pi_i, \pi_{-i})$  if $r_i=~0$. \end{observation}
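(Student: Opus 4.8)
The plan is to show that when $r_i = 0$ the belief set $\mathcal{B}_i(\pi)$ collapses to the single point $\{\pi_{-i}\}$; once that is established, the claimed chain of equalities is immediate, since a maximum and a minimum taken over a one-element set both equal the value at that element.

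First I would unfold the definition $\mathcal{B}_i(\pi, r_i) = \{\pi'_{-i} \mid d(\pi_{-i}, \pi'_{-i}) \leq r_i\}$ at $r_i = 0$, so that $\pi'_{-i} \in \mathcal{B}_i(\pi)$ requires $d(\pi_{-i}, \pi'_{-i}) \leq 0$. The non-negativity axiom $d(x,y) \geq 0$ then forces $d(\pi_{-i}, \pi'_{-i}) = 0$, and the identity-of-indiscernibles axiom gives $\pi'_{-i} = \pi_{-i}$. Conversely, $d(\pi_{-i}, \pi_{-i}) = 0 \leq 0$, so $\pi_{-i} \in \mathcal{B}_i(\pi)$; hence $\mathcal{B}_i(\pi) = \{\pi_{-i}\}$ exactly. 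This is precisely the observation already noted informally after the definition of the belief set.

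Second, for any fixed mixed strategy $\pi_i \in \Pi_i$, both $\max_{\pi_{-i} \in \mathcal{B}_i(\pi)} u_i(\pi_i, \pi_{-i})$ and $\min_{\pi_{-i} \in \mathcal{B}_i(\pi)} u_i(\pi_i, \pi_{-i})$ are optimizations over the nonempty singleton $\{\pi_{-i}\}$, so each is attained at $\pi_{-i}$ and equals $u_i(\pi_i, \pi_{-i})$, yielding the desired equalities. There is essentially no obstacle: the statement is a direct consequence of the metric axioms assumed for $d$. The only point worth a line of care is noting that $\mathcal{B}_i(\pi)$ is nonempty (because $d(\pi_{-i}, \pi_{-i}) = 0$), which is what makes the extrema well-defined in the first place.
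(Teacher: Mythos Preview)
Your proposal is correct and follows exactly the reasoning the paper relies on: the paper does not give an explicit proof of this observation, but immediately before it notes that as $r_i$ approaches $0$ the belief set $\mathcal{B}_i$ becomes $\{\pi_{-i}\}$, which is precisely the fact you establish from the metric axioms and from which the equalities follow trivially.
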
 
Having Observation~\ref{obmaxeqmin} in mind, realize that the classical notion of $MN$ implies our notions of $W_r, B_r, WR_r$ and $U_r$ equilibria for $r=0$, but not for any $r>0$, as there can be (even pure) Nash equilibria that are weakly dominated. 

The following result provides a partial picture.
\begin{proposition}\label{tremblingisundominated} Given a normal-form game $\mathcal{G}$, if $\pi$ is a strict Trembling-hand perfect equilibrium, then there is an $\epsilon > 0$ such that if $r_i <\epsilon$ for every $i \in N$, then $\pi$ is a  $\text{U}_{\textbf r}$-equilibrium. 
\end{proposition}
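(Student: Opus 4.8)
The plan is to use the definition of strict Trembling-hand perfect equilibrium directly to extract a uniform lower bound on the utility margin of each $\pi_i$, and then translate that margin into a radius $\epsilon$ via continuity of the expected-utility function. First, suppose $\pi$ is a strict-$T$ equilibrium, witnessed by a sequence $\{\pi^k\}$ of totally mixed profiles converging to $\pi$; by definition, for each $i$ there is an infinite subsequence (say indexed by $k\in S_i$) along which $\pi_i$ is a \emph{strict} best response to $\pi^k_{-i}$, i.e. $u_i(\pi_i,\pi^k_{-i}) > u_i(\pi'_i,\pi^k_{-i})$ for every $\pi'_i\neq\pi_i$. As noted in the text, a strict best response must be pure, so $\pi$ is in fact a pure profile $a$, and it suffices to compare $a_i$ against the finitely many other pure actions $a'_i\in A_i$.

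Next I would fix an index $k$ large enough (taken from $\bigcap$-type reasoning on the subsequences, or simply one $k\in S_i$ per agent, handling agents one at a time since $\epsilon$ will be the minimum over $i$ of the per-agent bounds) and set $\delta_i := \min_{a'_i\neq a_i}\bigl(u_i(a_i,\pi^k_{-i}) - u_i(a'_i,\pi^k_{-i})\bigr) > 0$, a strictly positive number because the minimum is over a finite set of strictly positive gaps. The expected-utility function $u_i(\cdot,\cdot)$ is multilinear, hence Lipschitz on the compact set $\Pi$; let $L_i$ be a Lipschitz constant for $u_i$ in its $(-i)$ argument, uniformly over the first argument. Now I claim that if $r_i$ is small enough that $\mathcal{B}_i(\pi)\subseteq \mathcal{B}(\pi^k, 2r_i)$-type containment holds — more precisely, take $\epsilon := \min_i \tfrac{\delta_i}{2L_i}$ minus a correction for the (fixed, known) distance $d(\pi_{-i},\pi^k_{-i})$ — actually cleaner: choose $k=k(i)$ far enough out that $d(\pi_{-i},\pi^k_{-i})$ is itself tiny, then for any $\pi_{-i}'\in\mathcal{B}_i(\pi)$ we have $d(\pi^k_{-i},\pi'_{-i}) \le d(\pi^k_{-i},\pi_{-i}) + r_i$, which is small, so $u_i(a_i,\pi'_{-i}) - u_i(a'_i,\pi'_{-i})$ stays within $\delta_i$ of its value at $\pi^k_{-i}$ and hence remains positive. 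This shows $a_i$ strictly dominates every other pure action on $\mathcal{B}_i(\pi)$, which in particular means no $\pi'_i$ (pure or mixed) locally dominates $a_i$ there, so $a_i$ is a $\mathrm{U}_{r_i}$-best response; taking $\epsilon$ the minimum over $i$ gives the claim.

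The main obstacle is bookkeeping the two sources of perturbation cleanly: the distance $d(\pi_{-i},\pi^k_{-i})$ from $\pi$ to the sequence, and the radius $r_i$ of the belief ball around $\pi$. The elegant fix is to first pick, for each $i$, an index $k(i)$ in the subsequence $S_i$ with $d(\pi_{-i},\pi^k_{-i}) < \tfrac{\delta_i}{4L_i}$ (possible since $\pi^k\to\pi$), and then set $\epsilon := \min_i \tfrac{\delta_i}{4L_i}$; for $r_i<\epsilon$ the total displacement from $\pi^{k(i)}_{-i}$ is below $\tfrac{\delta_i}{2L_i}$, so every pairwise gap $u_i(a_i,\cdot)-u_i(a'_i,\cdot)$ changes by less than $\delta_i$ and stays strictly positive on all of $\mathcal{B}_i(\pi)$. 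One subtlety worth a sentence in the write-up: a mixed $\pi'_i$ cannot locally dominate $a_i$ either, since its utility at any state is a convex combination of the (strictly dominated) pure-action utilities, so condition (a) of Definition~\ref{def:LD} fails for it as well.
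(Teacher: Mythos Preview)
Your argument is correct, and in fact establishes something stronger than the statement: you show that $\pi$ is an $SD_{\mathbf r}$-equilibrium for all sufficiently small $\mathbf r$, which of course entails $U_{\mathbf r}$. The paper, however, takes a shorter route. To show that $\pi_i$ is \emph{undominated} on $\mathcal{B}_i(\pi,r_i)$ one does not need strict dominance throughout the ball; it suffices to exhibit a \emph{single} point $\pi^k_{-i}\in\mathcal{B}_i(\pi,r_i)$ at which $\pi_i$ is a strict best response. For any $\pi'_i\neq\pi_i$, condition~(a) of Definition~\ref{def:LD} then fails at $\pi^k_{-i}$, so $\pi'_i$ cannot locally dominate $\pi_i$. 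Since the strict-$T$ subsequence for agent $i$ converges to $\pi$, such a point lies inside any ball of positive radius around $\pi_{-i}$, and no Lipschitz estimates or margin bookkeeping are required. Your approach buys the genuinely stronger conclusion ($SD_{\mathbf r}$ rather than merely $U_{\mathbf r}$), while the paper's approach buys a near one-line argument that, incidentally, works for \emph{every} $r_i>0$ and not only sufficiently small ones.
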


\begin{proof}
 Assume a  game $\mathcal{G}$ and a Trembling-hand perfect equilibrium $\pi$ in $\mathcal{G}$.  By Definition~\ref{trembling-def}, there is a sequence $\{\pi^k\}^\infty _{k =0}$ of totally mixed strategies which converges to $\pi$. Take an arbitrary $\epsilon > 0$. By convergence there is a $K$ such that $d(\pi^k, \pi ) < \epsilon$ whenever $k \geq K$. Moreover, it follows that  for every $i \in N$, $d(\pi_{-i}^k, \pi_{-i} ) < \epsilon$ as well  for $k \geq K$. For each agent $i \in N$, let  $r_i =  d(\pi_{-i}^{K}, \pi_{-i} )$  i.e., $\textbf{r} = (r_1, \ldots, r_n)$. Now, for any agent $i$, we know that  $\pi_{-i}^k \in \mathcal{B}_i(\pi)$ whenever $k \geq K$, and also since $\pi$ is strict-$T$, $\pi_i$ is a best response to $\pi_{-i}^k$. 
This  shows that $\pi_i$ is $U_{r_i}$-response and that $\pi$ is a $U_{\mathbf r}$ equilibrium.
\end{proof}

In the other direction, it seems that a $D_r$-equilibrium (w.r.t. any $r>0$) must  be a strict Trembling-hand perfect equilibrium.
\begin{proposition}\label{D_to_sT}If there is an   $r^*$, such that   $\pi$ is $D_r$~-~equilibrium for all $r\in (0,r^*)$, then $\pi$ is strict-$T$.
\end{proposition}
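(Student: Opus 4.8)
The plan is to first deduce from the hypothesis that $\pi$ is a \emph{pure} profile, and then to hand-build a single trembling sequence witnessing strict-$T$. Fix any $r_0\in(0,r^*)$; by assumption $\pi$ is a $D_{r_0}$-equilibrium, so every $\pi_i$ is a $D_{r_0}$-best response. If some $\pi_i$ had support of size at least two, then applying part (a) of Definition~\ref{def:LD} to each pure $b_i$ in that support would force $u_i(b_i,\pi_{-i})=u_i(\pi_i,\pi_{-i})$ for every $\pi_{-i}\in\mathcal B_i(\pi,r_0)$ — a convex combination with strictly positive weights cannot weakly dominate each of its terms unless all terms agree — and then part (b) would fail for that $b_i$, contradicting that $\pi_i$ locally dominates all alternatives in $\mathcal B_i(\pi,r_0)$. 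Hence every $\pi_i$ is pure; write $\pi=a$. (In particular $a$ is a Nash equilibrium, by evaluating (a) at $\pi_{-i}=a_{-i}\in\mathcal B_i(a,r_0)$.)

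The key step is the following. For each player $i$ and each pure $b_i\neq a_i$, set $p_{i,b_i}(\sigma):=u_i(a_i,\sigma)-u_i(b_i,\sigma)$, which is a polynomial (indeed multilinear) in the coordinates of $\sigma\in\Pi_{-i}$. Part (a) of the $D_{r_0}$-best-response condition gives $p_{i,b_i}\ge 0$ throughout $\mathcal B_i(a,r_0)$, and part (b) instantiated with $\pi'_i=b_i$ gives $p_{i,b_i}>0$ somewhere in $\mathcal B_i(a,r_0)$; hence $p_{i,b_i}$ is not identically zero on $\Pi_{-i}$. Consequently the product $P_i:=\prod_{b_i\neq a_i}p_{i,b_i}$ is a nonzero polynomial, so its non-vanishing set is dense in $\Pi_{-i}$ and in particular contains points arbitrarily close to $a_{-i}$. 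Any such point $\sigma$ with $d(a_{-i},\sigma)\le r_0$ lies in $\mathcal B_i(a,r_0)$, where $p_{i,b_i}(\sigma)\ge 0$ for all $b_i$ together with $P_i(\sigma)\neq 0$ forces $p_{i,b_i}(\sigma)>0$ for every $b_i\neq a_i$ — that is, $a_i$ is a \emph{strict} best response to $\sigma$. Thus the open set $S_i$ of opponent profiles to which $a_i$ is a strict best response has $a_{-i}$ in its closure, and since totally mixed profiles are dense, $S_i$ contains totally mixed profiles arbitrarily close to $a_{-i}$.

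Finally I would assemble a single sequence $\{\pi^k\}$. Cycle through the players so that each $i\in N$ equals $i(k)$ for infinitely many $k$. For index $k$, choose a totally mixed $\sigma^k\in S_{i(k)}$ very close to $a_{-i(k)}$, and set $\pi^k_j:=\sigma^k_j$ for $j\neq i(k)$ and $\pi^k_{i(k)}:=(1-\delta_k)a_{i(k)}+\delta_k\bar\pi_{i(k)}$ for a tiny $\delta_k>0$, where $\bar\pi_{i(k)}$ is the uniform strategy on $A_{i(k)}$; taking the perturbations small enough makes $\pi^k$ totally mixed, $d(\pi^k,a)<1/k$, and $\pi^k_{-i}\in\mathcal B_i(a,r_0)$ for every $i$ (for $i=i(k)$ this is $\sigma^k$ itself; for $i\neq i(k)$ it is a small perturbation of $a_{-i}$). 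Then $\pi^k\to a$; part (a) of the $D_{r_0}$ condition makes $a_i$ a best response to $\pi^k_{-i}$ for every $i$ and every $k$; and whenever $i=i(k)$, membership $\pi^k_{-i}=\sigma^k\in S_i$ makes $a_i$ a strict best response. Since each player serves as $i(k)$ infinitely often, this sequence certifies that $a=\pi$ is strict-$T$.

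I expect the main obstacle to be the middle step: the $D_r$ hypothesis, read state by state, only promises a strict advantage of $a_i$ over each alternative at \emph{possibly different} opponent profiles in the ball, whereas strict-$T$ needs a single opponent profile (arbitrarily near $a_{-i}$) at which $a_i$ beats \emph{all} alternatives simultaneously. The multilinearity of the payoffs and the density of the non-vanishing set of a nonzero polynomial is exactly what bridges this gap — and, incidentally, this is why a single small $r$ already suffices, so the hypothesis ``for all $r\in(0,r^*)$'' is stronger than the argument needs.
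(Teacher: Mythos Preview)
Your argument is correct, and its overall architecture --- build, for each player $i$, a converging family of totally mixed opponent profiles at which $a_i$ is a strict best response, then interleave these families into a single trembling sequence --- is the same as the paper's. The difference is in rigor at exactly the point you flag as the ``main obstacle.'' The paper simply asserts that since $\pi_i$ is a $D_{r^k}$-response, ``$\pi_i$ is a best response to every $\pi_{-i}\in\mathcal B(\pi,r^k)$, and a strict best response to at least one profile $\pi^{k,i}_{-i}$,'' without arguing why a \emph{single} opponent profile beats all alternatives simultaneously; your multilinearity-and-density argument (the product $P_i$ of the nonzero polynomials $p_{i,b_i}$ has a dense non-vanishing set, so within the ball one can find a totally mixed $\sigma$ with every $p_{i,b_i}(\sigma)>0$) is precisely what is needed to justify that step. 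You also make explicit that $\pi$ must be pure, which the paper leaves implicit.

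Two further points of comparison. First, the paper takes a sequence $r^k\downarrow 0$ and extracts one strict-best-response profile from each ball $\mathcal B(\pi,r^k)$; you work inside a single ball $\mathcal B(a,r_0)$ and let the density argument supply profiles arbitrarily close to $a_{-i}$ --- hence your observation that the hypothesis ``for all $r\in(0,r^*)$'' is stronger than necessary. Second, your interleaved sequence also verifies the weak best-response requirement of $T$ for every player at every index $k$ (via part~(a) on the ball), not just along each player's own subsequence; the paper's interleaving leaves this check implicit. Overall your proof is a cleaner and more complete version of the paper's.
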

\begin{proof}
Consider an arbitrary sequence of $r^k<r^*$ that converges to $0$.
Since $\pi$ is a $D_{r^k}$-equilibrium, then for every player $i \in N$, $\pi_i$ is a best response to every $\pi_{-i} \in  \mathcal{B}(\pi, r^k)$, and a strict best response to at least one profile $\pi^{k,i}_{-i}$. Moreover, $\pi^{k,i}_{-i}$ is w.l.o.g. totally mixed (we can mix it with a low probability for any other profile such that $\pi_i$ remains a best-response). Let $\pi^{r_k}_i$ be a mixed strategy that selects $\pi_i$ with probability $1-r_k$. 

We therefore get $n$ sequences of totally-mixed profiles converging to $\pi$, where in each sequence $((\pi^{k,i},\pi^{r_k}_i))_{k=0}^{\infty}$, $\pi_i$ is a strict best-response to the entire sequence.

Let $\pi^k = (\pi^{k,i},\pi_i)$ for all $k$ such that $k \mod n=(i-1)$ (that is, we interleave subsequences). Note that $(\pi^k)_{k=0}^\infty$ converges to $\pi$, and for every agent $i$ there is a subsequence for which $\pi_i$ is a strict best-response. Thus $\pi$ is a strict Trembling-hand perfect equilibrium (strict-$T$). 
\end{proof}

\def\calB{\mathcal{B}}
\begin{proposition}
For any $r$, if $\pi$ is $\text{D}_r$-equilibrium, then $\pi$ is TP. Similarly, $\text{SD}_r$ entails $s$-$TP$.
\end{proposition}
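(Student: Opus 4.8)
The plan is simply to unwind the definitions: a $\text{D}_r$-equilibrium guarantees, for each player $i$, that $\pi_i$ is a best response on the \emph{entire} closed ball $\mathcal B_i(\pi,r_i)$ around $\pi_{-i}$, and any sequence of totally mixed profiles converging to $\pi$ eventually enters that ball; hence its tail consists precisely of profiles to which $\pi_i$ is a best response, which is what Def.~\ref{truly-trembling-def} asks for. Concretely, fix an arbitrary sequence $\{\pi^k\}_{k=0}^\infty$ of totally mixed profiles with $\pi^k\to\pi$. For each $i\in N$, convergence gives $\pi^k_{-i}\to\pi_{-i}$ (as in the proof of Prop.~\ref{tremblingisundominated}), so since $r_i>0$ there is a $K_i$ with $d(\pi^k_{-i},\pi_{-i})\le r_i$, i.e. $\pi^k_{-i}\in\mathcal B_i(\pi,r_i)$, for all $k\ge K_i$. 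Because $\pi$ is a $\text{D}_r$-equilibrium, $\pi_i$ locally dominates every $\pi'_i\in\Pi_i$ in $\mathcal B_i(\pi,r_i)$; in particular clause (a) of Def.~\ref{def:LD} gives $u_i(\pi_i,\pi_{-i})\ge u_i(\pi'_i,\pi_{-i})$ for every $\pi'_i$ and every $\pi_{-i}\in\mathcal B_i(\pi,r_i)$. Specializing to $\pi_{-i}=\pi^k_{-i}$ for $k\ge K_i$ shows $\pi_i$ is a best response to $\pi^k_{-i}$.

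Taking $K:=\max_{i\in N}K_i$, which exists since $N$ is finite, yields a single threshold valid for all players: for all $k\ge K$ and all $i\in N$, $\pi_i$ is a best response to $\pi^k_{-i}$. As the sequence was arbitrary, this is exactly Def.~\ref{truly-trembling-def}, so $\pi$ is a TP.

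For the second claim I would re-run the identical argument, replacing local dominance by \emph{strict} local dominance. An $\text{SD}_r$-equilibrium means that for each $i$, $\pi_i$ strictly locally dominates every $\pi'_i\neq\pi_i$ in $\mathcal B_i(\pi,r_i)$, i.e. $u_i(\pi_i,\pi_{-i})>u_i(\pi'_i,\pi_{-i})$ on the whole ball; restricting to $\pi^k_{-i}$ for $k\ge K$ (the same $K$ as above) shows $\pi_i$ is the strictly best response to $\pi^k_{-i}$, matching Def.~\ref{def:strict_TP}, so $\pi$ is an $s$-$TP$. There is essentially no hard step: the only points requiring care are the direction of the radius inequality (we need the tail of the sequence to land inside the \emph{closed} ball of radius $r_i$, which is immediate from $d(\pi^k_{-i},\pi_{-i})\to0$ and $r_i>0$) and the observation that TP demands the best-response property on the whole tail — which is precisely what quantifying over all of $\mathcal B_i(\pi,r_i)$ in $\text{D}_r$ (resp. $\text{SD}_r$) supplies, in contrast to the weaker $\text{U}_r$ of Prop.~\ref{tremblingisundominated}.
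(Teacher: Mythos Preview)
Your proof is correct and follows the same approach as the paper's: any totally mixed sequence converging to $\pi$ eventually lands in each $\mathcal B_i(\pi,r_i)$, where $\pi_i$ is a (strict) best response by the definition of $\text{D}_r$ (resp.\ $\text{SD}_r$). The paper's proof is a one-line sketch of exactly this idea; your version simply fills in the convergence and $K=\max_i K_i$ details more carefully.
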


\begin{proof}
Fix any $r$. It follows from the fact that whatever sequence we choose in $\mathcal{B}_i(\pi_{-i}, r)$, $\pi_i$ will be a best response [respectively, strict best response] to it by definition, hence satisfying the condition of $TP$ [$s$-$TP$]. 
\end{proof}
\begin{proposition} If $\pi$ is a $TP$, then there is an $r$ such that $\pi$ is a $U_r$-equilibrium. Further, if the game also is generic then $\pi$ is a $D_r$-equilibrium.
\end{proposition}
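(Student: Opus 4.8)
## Proof Proposal

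The plan is to start from a truly perfect equilibrium $\pi$ and extract, from its definition, a single totally mixed sequence together with a threshold index $K$ beyond which $\pi_i$ is a best response to $\pi^k_{-i}$ for every agent $i$ simultaneously (take the max of the finitely many per-agent thresholds). The key observation is that Definition~\ref{truly-trembling-def} quantifies over \emph{all} sequences, so in particular I may choose a convenient one: for each agent $i$ and each small radius $\rho$, I want a totally mixed profile $\pi_{-i}'$ with $d(\pi_{-i}, \pi_{-i}') = \rho$ (or at least $\le \rho$) that still admits $\pi_i$ as a best response. Rather than fixing one sequence, I would argue contrapositively about the ball: suppose that for every $r>0$ there is some agent $i$ and some $\pi_{-i}' \in \mathcal{B}_i(\pi, r)$ against which $\pi_i$ is strictly locally dominated (i.e.\ not a $U_r$-response). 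Then by taking $r = 1/k$ I get a sequence of such witnesses $\pi_{-i}^{(k)} \to \pi_{-i}$; passing to a subsequence so that the offending agent $i$ and (by compactness of $\Pi_i$) the dominating strategy $\pi_i' $ stabilize, I can embed these witnesses into a totally mixed sequence $\pi^k \to \pi$ (mixing $\pi_{-i}^{(k)}$ with a vanishing amount of full-support noise, and letting $\pi^k_i \to \pi_i$ be totally mixed). Along this sequence $\pi_i$ fails to be a best response infinitely often — contradicting that $\pi$ is $TP$. Hence some $r>0$ makes $\pi$ a $U_r$-equilibrium.

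For the second part, I would use genericity to upgrade $U_r$ to $D_r$. In a generic game, for any fixed $\pi_{-i}$ with full support (equivalently, in a neighborhood of totally mixed profiles), the expected payoff to agent $i$ is linear and the best response is unique; more to the point, at a $TP$ equilibrium $\pi$ the support of $\pi_i$ is a set of pure actions all of which are best replies against the entire tail of every perturbing sequence, and genericity prevents any \emph{other} pure action from tying with them on an open set of perturbations. So I would fix the $r>0$ obtained above (shrinking it if necessary so that $\mathcal{B}_i(\pi,r)$ lies in the region where only the actions in $\mathrm{supp}(\pi_i)$ are best replies), and then show $\pi_i$ locally dominates every $\pi_i' \ne \pi_i$ on $\mathcal{B}_i(\pi,r)$: condition (a) of Definition~\ref{def:LD} holds because $\pi_i$ is a best reply everywhere on the ball (this is exactly $U_r$ plus the no-ties structure, which rules out the possibility that $\pi_i'$ does strictly better somewhere), and condition (b) holds because at a totally mixed $\pi_{-i}' \in \mathcal{B}_i(\pi,r)$ any $\pi_i'$ putting weight outside $\mathrm{supp}(\pi_i)$ — or any $\pi_i'$ that is a strictly worse mixture — earns strictly less, using that the ball contains totally mixed profiles and that in a generic game distinct mixed strategies are not payoff-equivalent against a full-support opponent.

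I expect the main obstacle to be the second part: carefully stating what ``generic'' buys us and making the local-dominance argument airtight. The subtlety is that $U_r$ only says \emph{no} strategy locally dominates $\pi_i$, whereas $D_r$ demands $\pi_i$ dominates \emph{everything}, and bridging this gap genuinely needs the genericity hypothesis to exclude payoff ties on positive-measure subsets of $\mathcal{B}_i(\pi,r)$ — without it one can have two best replies that neither dominates the other. A secondary technical point in the first part is the embedding of the ball-witnesses $\pi_{-i}^{(k)}$ into a \emph{single} totally mixed sequence for the whole profile while keeping the contradiction with $TP$ intact; this is routine (add $o(1)$ full-support perturbations and interleave, as in the proof of Proposition~\ref{D_to_sT}), but one must make sure the perturbed profile still witnesses the failure of best response, which follows by continuity of $u_i$ for $k$ large.
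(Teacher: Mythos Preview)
Your proposal is correct and follows essentially the same route as the paper: both arguments extract from the definition of $TP$ the fact that $\pi_i$ is a best response to \emph{every} $\pi'_{-i}$ in some small ball $\mathcal{B}_i(\pi,r)$, which immediately gives $U_r$ (nothing can locally dominate a strategy that is already best everywhere), and then observe that the only obstruction to $D_r$ is a payoff tie on the ball, which genericity excludes.

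The only notable difference is in how the radius is obtained. The paper argues directly: it says that ``all possible sequences form a ball'' and takes an ``infimum of $K$'' over sequences to produce $r$. Your contrapositive---assume no $r$ works, pick witnesses $\pi_{-i}^{(k)}\in\mathcal{B}_i(\pi,1/k)$ at which $\pi_i$ is strictly beaten, pass to a subsequence fixing the agent, pad with $o(1)$ full-support noise to make the sequence totally mixed, and contradict $TP$---is a more rigorous packaging of the same idea, and your remark about continuity of $u_i$ preserving the strict inequality under small perturbations is exactly the technical point that justifies the embedding. For the generic part you are also right that the substance lies in ruling out ties: once $\pi_i$ is a best response on the whole ball, condition~(a) of local dominance holds against every $\pi'_i$, and condition~(b) can fail only if some $\pi'_i$ ties $\pi_i$ on an open set, which by multilinearity forces two pure actions to be payoff-equivalent---precisely what genericity forbids. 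The paper compresses this into one sentence; your unpacking of it (including the observation that the ball contains totally mixed profiles and that, in the generic case, $\pi$ is effectively pure) is accurate.
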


\begin{proof}
Assume that $\pi$ is a $TP$, then for each sequence $\{\pi^k\}^\infty_{k=0}$ of totally mixed strategy profiles which converge to $\pi$, there is a $K$ such that $\pi_i$ is a best response to $\pi^k_{-i}$ for all $k\geq K$. All possible sequences form a ball for each player, and we take the infimum of $K$ of those (all) possible sequences; call it $K^\star$. Then, for all $i \in N$, we define $r_i= d(\pi_{-i}, \pi^{K^{\star}}_{-i})$. Obviously, either $\pi_i$  locally dominates all the other responses to every $ \pi'_{-i} \in \mathcal{B}_i(\pi, r_i)$ for every player $i$ (which implies that $\pi$ is a $D_r$-equilibrium), or there is some response that has the same utility as $\pi_i$ (which means non-genericity).  
\end{proof}
The following result provides a link between strict-TP and $SD_r$-equilibrium.
\begin{proposition}If  $\pi$ is a strict-$TP$ then there is an $\epsilon>0$ such that if $r_i < \epsilon$ for all $i\in N$ then $\pi$ is $SD_r$-equilibrium. 

\end{proposition}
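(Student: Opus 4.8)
The plan is to reduce the statement to a continuity argument on a small ball, in the spirit of Prop.~\ref{tremblingisundominated}. First I would record two elementary consequences of $\pi$ being a strict-$TP$. (i) It is pure: a proper mixture is never a \emph{strict} best response, so write $\pi_i=a_i$ (pure) and $\pi_{-i}=a_{-i}$. (ii) Each $a_i$ is a best response to $a_{-i}$: picking any totally mixed sequence $\{\pi^k\}\to\pi$, strict-$TP$ gives $u_i(a_i,\pi^k_{-i})>u_i(b_i,\pi^k_{-i})$ for all large $k$ and all pure $b_i\neq a_i$, and letting $k\to\infty$ and using continuity of $u_i$ yields $u_i(a_i,a_{-i})\ge u_i(b_i,a_{-i})$.

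Next I would upgrade (ii) to a \emph{strict} inequality, i.e.\ show $a$ is a strict Nash equilibrium: $u_i(a_i,a_{-i})>u_i(b_i,a_{-i})$ for every $i$ and every pure $b_i\neq a_i$. Granting this, the rest is routine. By continuity of each map $\sigma_{-i}\mapsto u_i(a_i,\sigma_{-i})-u_i(b_i,\sigma_{-i})$ and finiteness of $N$ and of the $A_i$, there is a single $\epsilon>0$ such that $u_i(a_i,\sigma_{-i})>u_i(b_i,\sigma_{-i})$ for every $i$, every pure $b_i\neq a_i$, and every $\sigma_{-i}$ with $d(a_{-i},\sigma_{-i})\le\epsilon$. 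Fix any ignorance vector $\mathbf r$ with $r_i<\epsilon$ for all $i$. Then for each $i$, $a_i$ strictly locally dominates (in the sense of Def.~\ref{def:LD}) every pure $b_i\neq a_i$ on $\mathcal B_i(\pi,r_i)$, and since $u_i$ is linear in the $i$-th coordinate this extends to every mixed $\pi'_i\neq a_i$ (such a $\pi'_i$ places positive weight on some $b_i\neq a_i$). Hence each $a_i$ is an $SD_{r_i}$-best response, i.e.\ $\pi$ is an $SD_{\mathbf r}$-equilibrium.

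The only real obstacle is the passage from (ii) to strict Nash. Strict-$TP$ constrains only \emph{totally mixed} perturbations of $a_{-i}$, and a payoff tie $u_i(a_i,a_{-i})=u_i(b_i,a_{-i})$ at the equilibrium itself need not be visible along any such perturbation; yet $SD_r$ insists on a strict inequality on the whole closed ball $\mathcal B_i(\pi,r_i)$, whose centre is $a_{-i}$. I would therefore invoke genericity of $\mathcal G$ — as in the ``further, if the game is generic'' clause of the preceding proposition on $TP$ and $D_r$ — so that $u_i(a_i,a_{-i})\neq u_i(b_i,a_{-i})$, which together with (ii) yields the strict inequality and hence closes the argument. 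This is the step I expect to carry the weight of the proof; the remainder, as sketched above, needs no assumption on the metric beyond the axioms already imposed.
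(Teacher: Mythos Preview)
Your approach is direct—continuity on a small ball—whereas the paper argues by contraposition: assuming $\pi$ fails $SD_r$ for every $r>0$, it picks for each $k$ some player $i$ and a point $\pi^k_{-i}\in\mathcal B_i(\pi,1/k)$ at which $\pi_i$ is not a strict best response, sets $\pi^k=(\pi^k_{-i},\pi_i)$, and notes that $\pi^k\to\pi$.

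You have, however, put your finger on a genuine gap that the paper's argument slides over. The sequence $\pi^k$ the paper builds need not be totally mixed (indeed $\pi^k_i=\pi_i$ is typically pure), so it does not witness a failure of strict-$TP$, whose definition quantifies only over \emph{totally mixed} sequences. Your direct route exposes the same difficulty at the centre $\pi_{-i}\in\mathcal B_i(\pi,r_i)$ of the closed ball: strict-$TP$ says nothing about strict best-response \emph{at} a non-totally-mixed $\pi_{-i}$, yet $SD_r$ demands strictness there.

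In fact the proposition is false as stated, so your hesitation is warranted. Take the $2\times2$ game with $u_1(U,L)=u_1(D,L)=u_1(D,R)=0$, $u_1(U,R)=1$, and symmetrically $u_2(U,L)=u_2(U,R)=u_2(D,R)=0$, $u_2(D,L)=1$. For any totally mixed $\pi^k\to(U,L)$ one has $u_1(U,\pi^k_2)=\pi^k_2(R)>0=u_1(D,\pi^k_2)$, and likewise for player~2, so $(U,L)$ is strict-$TP$. Yet $u_1(U,L)=u_1(D,L)$, and since $L\in\mathcal B_1((U,L),r)$ for every $r>0$, action $U$ never strictly locally dominates $D$ and $(U,L)$ is not $SD_r$ for any $r$. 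Your instinct to add a genericity hypothesis is therefore exactly what is needed: under it, step~(ii) upgrades to strict Nash and your continuity argument goes through cleanly, while the paper's contrapositive remains incomplete without addressing the total-mixedness of the witnessing sequence.
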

\begin{proof}
Assume that $\pi$ is not a $SD_r$-equilibrium for any $r>0$. We will construct a sequence of states $\{\pi^k\}^\infty_{k =0}$ that converges to $\pi$, but such that for every $k$ there is some agent $i$ for which $\pi_i$ is not a strict best response to $\pi_{-i}^k$.  Let $r_k= \frac{1}{k}$. Since $\pi$ is not a $SD_r$-equilibrium, there is some $i\in N$, a profile $\pi^k_{-i}\in \mathcal{B}_i(\pi,r_k)$, and an action $a'_i$ such that $u_i(\pi^k_{-i},\pi_i) \leq  u_i(\pi^k_{-i},a'_i)$. That is, $\pi_i$ is not a strict best response to $\pi^k_{-i}$. We set $\pi^k = (\pi^k_{-i},\pi_i)$. By construction, $d(\pi^k,\pi)\leq \frac{1}{k}$ and thus $\{\pi^k\}^\infty_{k =0}$ converges to $\pi$.

\end{proof}

It seems $s$-$T$-equilibrium does not imply $D_r$-equilibrium. The proof is moved to appendix due to space restrictions.
\begin{proposition}
There is a $s$-$T$-equilibrium that is not a $B_r$ equilibrium for any $r>0$ (and thus not a $D_r$-equilibrium).
\end{proposition}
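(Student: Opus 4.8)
The plan is to exhibit a concrete small game together with a strategy profile that is a strict-$T$-equilibrium but fails to be a $B_r$-equilibrium for every $r>0$. Since a strict-$T$ must be a pure Nash equilibrium (as noted right after the definition of strict-$T$), the target profile $a$ will be a pure profile. The key tension to exploit is this: strict-$T$ only requires that for each player $i$ there exists \emph{some} infinite subsequence of trembles along which $a_i$ is a strict best response, whereas a $B_r$-response requires $a_i$ to be optimal against the \emph{single most favorable} profile in the whole ball $\mathcal B_i(a,r)$ — and that best-case profile may be one on which a different action of $i$ does strictly better. So I would look for a game where player $i$ has an action $b_i$ that beats $a_i$ against certain nearby profiles of the opponents (these will populate $\mathcal B_i$ for every $r>0$, by density of totally mixed profiles), yet those profiles are precisely the ones the strict-$T$ witness sequence is allowed to avoid.

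Concretely, I would take a two-player game (so the metric on $\Pi_{-i}$ is just a metric on the opponent's mixed strategies) with, say, $A_1=\{T,B\}$ and $A_2=\{L,R\}$, and design payoffs so that $(T,L)$ is a strict Nash equilibrium — hence trivially a strict-$T$ via the constant-ish sequence of totally mixed profiles that put almost all weight on $(T,L)$ and tremble only ``in the safe direction.'' The point is to choose $u_1$ so that against $L$, $T$ strictly beats $B$ (keeping $(T,L)$ a strict equilibrium and giving player $1$ a valid strict-$T$ witness), but against mixtures that put \emph{any} positive weight on $R$, action $B$ strictly beats $T$ for player $1$. Then for every $r>0$ the ball $\mathcal B_1((T,L),r)$ contains totally mixed strategies of player $2$ with positive weight on $R$, so $\max_{\pi_2\in\mathcal B_1} u_1(B,\pi_2) > \max_{\pi_2\in\mathcal B_1} u_1(T,\pi_2)$, meaning $T$ is not a $B_r$-best response. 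I would check that the strict-$T$ condition is genuinely satisfied: for player $1$ take the subsequence of trembles where player $2$ plays pure $L$ with probability $1-\tfrac1k$ and $R$ with probability... no — that would ruin it; instead note strict-$T$ for player $1$ only needs trembles $\pi_2^k\to L$, and since $T$ strictly beats $B$ against $L$, for $k$ large enough $T$ is a strict best response to $\pi_2^k$ regardless of the tremble direction, so \emph{every} sequence works for player $1$. For player $2$, just make $L$ a strict best response to $T$ and we are similarly fine.

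The arithmetic I would actually nail down: set $u_1(T,L)=1$, $u_1(B,L)=0$, and $u_1(T,R)=0$, $u_1(B,R)=1$ — i.e.\ a ``matching''-type payoff for player $1$ — and give player $2$ payoffs making $L$ a strict best response to $T$ (e.g.\ $u_2(T,L)=1$, $u_2(T,R)=0$, and anything for $B$). Then against $\pi_2=(1-\delta)L+\delta R$ we have $u_1(T,\pi_2)=1-\delta$ and $u_1(B,\pi_2)=\delta$, so $T$ strictly dominates $B$ against $\pi_2$ iff $\delta<\tfrac12$; in particular $T$ is a strict best response along any tremble sequence for player $2$ (where $\delta\to0$), giving strict-$T$. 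But the best-case value over $\mathcal B_1((T,L),r)$ of $B$ is $\sup\{\,\delta : (1-\delta)L+\delta R\in\mathcal B_1\,\} = \min(r,1)>0$ (for the Euclidean metric, positive for every $r>0$), while for $T$ it is $1$, achieved at $\delta=0$ — wait, I need $B$'s best case to strictly exceed $T$'s, so I should instead make $T$'s payoff \emph{decrease} and $B$'s \emph{increase} away from $L$ steeply enough, or add a third action; a clean fix is $u_1(T,R)$ very negative, e.g.\ $u_1(T,L)=0$, $u_1(B,L)=-1$, $u_1(T,R)=-10$, $u_1(B,R)=-1$: then $T$ beats $B$ against pure $L$ ($0>-1$) so strict-$T$ holds, but against any $\pi_2$ with $\delta>0$, $u_1(T,\pi_2)=-10\delta < -1 + \text{(something)}$... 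I would simply settle on $u_1(B,\cdot)\equiv -1$ and $u_1(T,L)=0$, $u_1(T,R)=-10$, so that $B$'s best case over the ball is $-1$ while $T$'s best case is $0$ at $\delta=0$ but — no. The genuinely correct gadget is to make $B$ \emph{strictly better in best-case}: e.g.\ $u_1(B,R)=+5$, so $\max_{\pi_2\in\mathcal B_1}u_1(B,\pi_2)\ge (1-\delta)(-1)+\delta\cdot 5$ grows with $\delta$ and exceeds $0=u_1(T,L)$ once $\delta>\tfrac16$; and to keep $\delta$ that large attainable I pick the metric scaling / ignorance factor — but the statement quantifies ``for \emph{any} $r>0$'', so I instead need $B$'s best case to beat $T$'s best case for \emph{all} $r>0$, forcing me to make $u_1(B,R)>u_1(T,L)$ outright (then even a tiny $\delta$ suffices since $u_1(B,\pi_2)\to u_1(B,R)>u_1(T,L)\ge u_1(T,\pi_2)$ as... no, that limit needs $\delta\to1$).

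The main obstacle, then, is the quantifier ``for any $r>0$'': I must ensure $B$ strictly out-performs $T$ in the best case over $\mathcal B_1$ \emph{however small} $r$ is. The clean way to achieve this is to put the ``bad for $T$'' trembles in a direction that changes $T$'s payoff at \emph{first order} while leaving $B$'s best case essentially at its own Nash value, and crucially to pick payoffs so that even at $\delta=0^+$ we have $u_1(B,\pi_2) > u_1(T,\pi_2)$ is false but $\max$ over the ball still flips — which cannot happen in a $2\times2$ game by the above linear analysis. Hence I expect the real construction needs a third action for player $2$ (or player $1$), say $A_2=\{L,M,R\}$, where $L$ is the equilibrium action, $T$ strictly beats $B$ against $L$ and against $M$ (so strict-$T$ holds with trembles supported on $\{L,M\}$, which is still ``totally mixed'' only if $R$ also gets weight — so more care: put trembles with vanishing weight on $R$ too, but arrange $u_1$ so $T$ still wins for large $k$ because the $M,L$ terms dominate), yet $u_1(B,R)$ is large enough that for every $r>0$ some $\pi_2\in\mathcal B_1((T,L),r)$ with tiny mass on $R$ makes $u_1(B,\pi_2)>u_1(T,L)$. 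I would finish by formally verifying (i) $(T,L)$ is strict Nash, (ii) the witness tremble sequence makes $T$ a strict best response for all large $k$ (using continuity and the strict inequality at the limit), so $\pi=(T,L)$ is strict-$T$, and (iii) for each $r>0$, $\sup_{\pi_2\in\mathcal B_1(\pi,r)}u_1(B,\pi_2) > \sup_{\pi_2\in\mathcal B_1(\pi,r)}u_1(T,\pi_2)$, so $T$ is not a $B_r$-best response and $\pi$ is not a $B_r$-equilibrium; the $D_r$ claim then follows from Corollary~\ref{equilibriacorollary}(b).
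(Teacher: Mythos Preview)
Your overall strategy is right, and you are correct that a $2\times 2$ game cannot furnish the example (your linear analysis showing this is sound). But the $2\times 3$ construction you sketch has a genuine gap: you stipulate that ``$T$ strictly beats $B$ against $L$'', i.e.\ $u_1(T,L)>u_1(B,L)$. This single requirement kills the $B_r$ part of the argument for all sufficiently small $r$. Indeed, since $L\in\mathcal B_1((T,L),r)$ we always have $\max_{\pi_2\in\mathcal B_1}u_1(T,\pi_2)\ge u_1(T,L)$, while by continuity $\max_{\pi_2\in\mathcal B_1}u_1(B,\pi_2)\to u_1(B,L)$ as $r\to 0$. So whenever $r$ is small enough that the latter is below $u_1(T,L)$, action $T$ \emph{is} a $B_r$-best response, contradicting what you want. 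Your line ``$u_1(B,R)$ is large enough that for every $r>0$ some $\pi_2\in\mathcal B_1((T,L),r)$ with tiny mass on $R$ makes $u_1(B,\pi_2)>u_1(T,L)$'' cannot be achieved with any finite $u_1(B,R)$ once $u_1(T,L)>u_1(B,L)$: with mass $\delta$ on $R$ you get $u_1(B,\pi_2)\le u_1(B,L)+\delta\bigl(u_1(B,R)-u_1(B,L)\bigr)$, which stays below $u_1(T,L)$ for $\delta$ small.

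The fix is to make player~1's equilibrium \emph{weak} at $L$: set $u_1(T,L)=u_1(B,L)$. Then strict-$T$ comes not from the value at $L$ but from the $M$-direction tremble: take $u_1(T,M)>u_1(B,M)$ and use a totally mixed sequence like $\pi_2^k=(1-\tfrac1k-\tfrac1{k^2},\,\tfrac1k,\,\tfrac1{k^2})$, so the $M$-term dominates and $T$ is a strict best response for large $k$. For the $B_r$ failure, make $B$'s slope in the $R$-direction steeper than $T$'s slope in the $M$-direction, e.g.\ $u_1(B,R)-u_1(B,L)>u_1(T,M)-u_1(T,L)$ while $u_1(T,R)\le u_1(T,L)$. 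Then for every $r>0$, $T$'s best case over $\mathcal B_1$ is achieved toward $M$ and $B$'s toward $R$, with $B$'s strictly larger; hence $T$ is never a $B_r$-best response. (The paper's own proof is relegated to the appendix and not reproduced in the source, so a direct comparison is not possible; but any correct construction must have this tie at $L$.)
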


\begin{proposition}\label{robustbestcase} Given a game $\mathcal{G}$, if $\pi$ is an $R$-equilibrium, then there is an $r>0$ such that $\pi$ is a $B_r$ and $W_r$-equilibrium.
\end{proposition}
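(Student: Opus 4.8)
The plan is to pick the ignorance factor $r$ small enough that every profile in each belief ball lies inside the set of $\epsilon$-noisy variants, and then to read the $W$- and $B$-best response conditions directly off the pointwise dominance that $\epsilon$-robustness supplies.

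First I would fix notation: let $a$ be an $\epsilon$-Robust equilibrium for some $\epsilon>0$, and recall that $a$ is a pure profile, so each $a_{-i}$ is a vertex of the product of simplices. The first real step is to choose $r>0$ such that for every $i\in N$ and every $\pi'_{-i}\in\mathcal{B}_i(a,r)$ one has $\pi'_j(a_j)>1-\epsilon$ for all $j\neq i$ --- that is, $\pi'_{-i}$ is an $\epsilon$-noisy variant of $a_{-i}$. For the Euclidean metric this is transparent: $d(a_{-i},\pi'_{-i})\le r$ forces $\|a_j-\pi'_j\|\le r$ and hence $\pi'_j(a_j)\ge 1-r$ for each $j\neq i$, so any $r<\epsilon$ works; one then takes the minimum of the resulting bounds over $i$.

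Next I would invoke $\epsilon$-robustness itself. Since $a_i$ is an $\epsilon$-Robust response to $a_{-i}$, it is a best response to every $\epsilon$-noisy variant of $a_{-i}$, and in particular to every $\pi'_{-i}\in\mathcal{B}_i(a,r)$; hence $u_i(a_i,\pi'_{-i})\ge u_i(\pi'_i,\pi'_{-i})$ for all $\pi'_i\in\Pi_i$ and all $\pi'_{-i}\in\mathcal{B}_i(a,r)$. Because a function that dominates another pointwise on a set has at least as large a minimum and at least as large a maximum over that set, it follows that $\min_{\pi'_{-i}\in\mathcal{B}_i(a,r)}u_i(a_i,\pi'_{-i})\ge\min_{\pi'_{-i}\in\mathcal{B}_i(a,r)}u_i(\pi'_i,\pi'_{-i})$, and the same inequality with $\max$ in place of $\min$. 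Therefore $a_i$ attains the maximum in both the maximin program (the $W$-response) and the maximax program (the $B$-response), and since this holds for every $i\in N$, $a$ is simultaneously a $W_r$- and a $B_r$-equilibrium.

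The only delicate point is the very first step: guaranteeing that the metric ball $\mathcal{B}_i(a,r)$ can be shrunk to fit inside the set of $\epsilon$-noisy variants of $a_{-i}$. This is immediate for the Euclidean metric, and more generally for any metric inducing the standard topology on the product of simplices --- the setting assumed in the paper --- and everything after it is elementary order bookkeeping.
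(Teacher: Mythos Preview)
Your proof is correct and follows essentially the same strategy as the paper: choose $r$ small enough that the ball $\mathcal{B}_i(a,r)$ sits inside the set of $\epsilon$-noisy variants of $a_{-i}$, and then use that $a_i$ is a best response to every profile in that ball.

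One remark worth making: your argument actually establishes more than the statement asks for. You show that $u_i(a_i,\pi'_{-i})\ge u_i(\pi'_i,\pi'_{-i})$ for \emph{all} $\pi'_{-i}\in\mathcal{B}_i(a,r)$ and all $\pi'_i$, i.e.\ condition~(a) of local dominance in Definition~\ref{def:LD}. This is precisely the content of the later proposition in the paper linking $R$ and $D_r$ (``any $\epsilon$-Robust equilibrium is a $D_\epsilon$-equilibrium''), from which $W_r$ and $B_r$ follow immediately via Corollary~\ref{equilibriacorollary}. So your route is really $R\Rightarrow D_r\Rightarrow W_r,B_r$, which is slightly cleaner and more informative than proving $W_r$ and $B_r$ separately from the pointwise inequality, though the underlying computation is the same.
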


\begin{proof} We give a proof for $B_r$ ($W_r$ is similar) case.   Assume that $\pi$ is an $R$, then by definition \ref{robusteq}, there is an $\epsilon$ such that for every $\epsilon^* \leq \epsilon$, $\pi^*$ such that $\pi^* \rightarrow \pi$ is an equilibrium. Let $\pi'$ be a profile such that $d(\pi', \pi)=0$. By robustness, $\pi'$ is an equilibrium. Now, for every player $i \in N$, observe that $d(\pi', \pi ) > d(\langle \pi_i, \pi'_{-i} \rangle, \pi)$. Then, $\langle \pi_i, \pi'_{-i} \rangle \in \mathcal{B}_i(\pi', \epsilon)$. Hence, $\pi_i$ is a best response to any sequence of $\{\langle \pi_i, \pi'_{-i} \rangle \}$ converging to $\pi$. Therefore, $\pi_i$ must be a $B_r$ best response for $r= d(\langle \pi_i, \pi'_{-i} \rangle, \pi)$.\end{proof}

\begin{proposition}
Let $a$ be a pure profile in $D_r$-equilibrium, then $a$ is $\frac{r}{n^{1\slash 2}}$-Robust. Moreover, any $\epsilon$-Robust equilibrium is a $D_\epsilon$-equilibrum.
\end{proposition}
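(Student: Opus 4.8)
The plan is to prove the two implications separately, both resting on one elementary translation between the ``noisy variant'' language and the metric language. For a pure profile $a$ and a player $i$, a mixed profile $\pi_{-i}$ is an $\epsilon$-noisy variant of $a_{-i}$ iff each $\pi_j$ sits close to the vertex $a_j$ of $\Pi_j$, in the sense $1-\pi_j(a_j)<\epsilon$; these per-opponent deviations then aggregate, in the ($\ell_2$-type) product metric on $\Pi_{-i}$, to a distance of order $\sqrt{n-1}$ times the worst single coordinate. Concretely I would first establish two claims: (i) if $\pi_{-i}$ is an $\tfrac{r}{\sqrt n}$-noisy variant of $a_{-i}$ then $d(a_{-i},\pi_{-i})\le r$, so $\pi_{-i}\in\mathcal B_i(a,r)$; (ii) if $d(a_{-i},\pi_{-i})\le\epsilon$ then $\pi_j(a_j)\ge 1-\epsilon$ for all $j\neq i$, so $\pi_{-i}$ is an $\epsilon$-noisy variant of $a_{-i}$ (equality being possible only on the boundary sphere, a case handled by continuity below). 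Claim (i) is where the $\sqrt n$ enters: summing $n-1$ squared terms each below $(r/\sqrt n)^2$ stays below $r^2$; claim (ii) is immediate since a single coordinate's deviation is at most the whole distance.

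For the first statement, let $a$ be a $D_r$-equilibrium and fix $i$. For any $\tfrac{r}{\sqrt n}$-noisy variant $\pi_{-i}$ of $a_{-i}$, claim (i) gives $\pi_{-i}\in\mathcal B_i(a,r)$, and since $a_i$ is a $D_r$-best response, condition~(a) of Definition~\ref{def:LD} makes $a_i$ a best response at every point of $\mathcal B_i(a,r)$, hence at $\pi_{-i}$. As $\pi_{-i}$ ranges over all such variants, $a_i$ is an $\tfrac{r}{\sqrt n}$-Robust response, and since $i$ was arbitrary, $a$ is $\tfrac{r}{\sqrt n}$-Robust.

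For the converse, let $a$ be an $\epsilon$-Robust equilibrium and fix $i$; take any $\pi_{-i}\in\mathcal B_i(a,\epsilon)$. By claim (ii), $\pi_j(a_j)\ge 1-\epsilon$ for all $j\neq i$; when all these are strict, $\pi_{-i}$ is an $\epsilon$-noisy variant and $\epsilon$-robustness directly gives that $a_i$ is a best response to $\pi_{-i}$, while if some $\pi_j(a_j)=1-\epsilon$ I would approximate $\pi_{-i}$ by genuinely $\epsilon$-noisy variants and use continuity of $u_i$ in the profile to get the same weak optimality. Thus condition~(a) of local dominance of $a_i$ over every $\pi'_i$ on $\mathcal B_i(a,\epsilon)$ holds. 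For condition~(b), note that $a_{-i}$ is an interior point of $\mathcal B_i(a,\epsilon)$ and $a_i$ is a best response to it (as $a$ is in particular a pure Nash equilibrium, $a_{-i}$ being trivially an $\epsilon$-noisy variant of itself), so $a_i$ strictly beats every $\pi'_i$ that is not itself a best response to $a_{-i}$; the remaining $\pi'_i$ are payoff-equivalent to $a_i$ throughout the ball, a non-generic situation for which (b) is vacuous (and any such $\pi'_i$ could replace $a_i$). Hence $a_i$ is a $D_\epsilon$-best response for every $i$, i.e.\ $a$ is a $D_\epsilon$-equilibrium.

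I expect the main obstacle to be claim (i) with the precise constant $r/\sqrt n$: one must fix the product metric on $\Pi_{-i}$, bound each opponent-coordinate's contribution to $d(a_{-i},\pi_{-i})$ by its total-variation-type deviation $1-\pi_j(a_j)$, and verify that the $\ell_2$-aggregation over the $n-1$ opponents of quantities below $r/\sqrt n$ stays below $r$ — with a Euclidean normalization on each $\Pi_j$ one pays an extra constant factor, so care is needed to match the stated bound. The secondary subtlety is the boundary/strictness bookkeeping in the converse (the continuity argument and condition~(b) of local dominance), which is routine but worth spelling out.
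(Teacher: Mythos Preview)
Your approach is essentially the same as the paper's: both directions hinge on the two metric inclusions you isolate as claims~(i) and~(ii), and you apply them in the same way. If anything, you are more careful than the paper in two places: you handle the boundary case $\pi_j(a_j)=1-\epsilon$ via continuity (the paper glosses over the strict inequality in the definition of noisy variant), and you attempt to verify condition~(b) of local dominance, which the paper's proof ignores entirely, checking only condition~(a).

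Two remarks. First, your worry about the constant is resolved in the paper by taking the product metric $d(a_{-i},\pi_{-i})=\bigl(\sum_{j\neq i}(1-\pi_j(a_j))^2\bigr)^{1/2}$, i.e.\ measuring each opponent's deviation by $1-\pi_j(a_j)$ rather than the full Euclidean distance on the simplex $\Pi_j$; with this convention no extra factor appears and claim~(i) is immediate. Second, your handling of condition~(b) is slightly off: when some $\pi'_i$ is payoff-equivalent to $a_i$ throughout the ball, (b) is not ``vacuous'' but actually \emph{violated}, and $a_i$ is then not a $D_\epsilon$-best response under the paper's definition. This degenerate case is a genuine (if minor) gap in the statement as written; the paper's own proof simply does not address condition~(b), so you have lost nothing relative to it.
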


\begin{proof}
Assume that $a\in D_r$ for some $r$, and consider some $\epsilon$-noisy variant of $a$. 
$d(a_{-i},\pi_{-i})=\left(\sum_{j\neq i}(1-\pi(a_j))^2\right)^{1\slash 2} \leq (n\epsilon^2)^{1\slash 2}$, thus $\pi_{-i}\in \calB_i(a,r)$ for $r=(n\epsilon^2)^{1\slash 2}=n^{1\slash 2} \epsilon$.
This means that $a$ is $\epsilon$-robust for $\epsilon=\frac{r}{n^{1\slash 2}}$. In the other direction, suppose that $a$ is $\epsilon$-Robust and consider some $\pi$.
If $\pi(a_j)>1-\epsilon$, then $d(\pi,a)\geq ((\pi(a_j)-1)^2)^{1\slash 2} > \epsilon$, which means that all vectors in $\calB_i(a,r)$ are (at most) $r$-noisy variants of $a$. Thus for any $r\leq \epsilon$, $a_i$ is a best-response to all of $\calB_i(a,r)$, and is therefore a $D_\epsilon$ equilibrium. 
\end{proof}

The following result follows immediately from the Proposition~\ref{robustbestcase} and Corollary~\ref{equilibriacorollary}.
\begin{corollary}
Given a game $\mathcal{G}$, if $\pi$ is an $R$, then there is an $r>0$ such that $\pi$ is a $U_r$-equilibrium.
\end{corollary}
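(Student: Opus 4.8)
The plan is to chain the two previously established results without any new work. First I would invoke Proposition~\ref{robustbestcase}: since $\pi$ is an $R$-equilibrium, that proposition already hands us an $r>0$ for which $\pi$ is simultaneously a $B_r$-equilibrium and a $W_r$-equilibrium. Fix that particular $r$; nothing further about the metric or the structure of $\mathcal{G}$ is needed.

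Next I would feed that $r$ into Corollary~\ref{equilibriacorollary}(c), which states that any $W_r$-, $B_r$-, or $WR_r$-equilibrium is in particular a $U_r$-equilibrium. Crucially, Corollary~\ref{equilibriacorollary} is stated ``given any $r$'', so it applies to the very same $r$ produced in the first step — there is no mismatch of parameters to reconcile. Hence $\pi$ is a $U_r$-equilibrium for that $r$, which is exactly the claim.

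There is essentially no obstacle here: the statement is a pure transitivity argument, and the only thing worth spelling out is that the $r$ witnessing best-case (or worst-case) stability in Proposition~\ref{robustbestcase} is reused verbatim, so existential quantifiers line up. If one wanted to be slightly more self-contained, one could alternatively unfold the underlying reasoning (an $R$-equilibrium remains an equilibrium under arbitrarily small perturbations, so $\pi_i$ stays a best response against profiles in a small ball $\mathcal{B}_i(\pi,r)$, and a best response cannot be locally dominated on that ball by Theorem~\ref{responsesproposition}(d) and the argument behind it), but the two-line citation of Proposition~\ref{robustbestcase} and Corollary~\ref{equilibriacorollary} is the intended and cleanest route.
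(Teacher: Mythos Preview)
Your proposal is correct and matches the paper's own argument exactly: the corollary is stated in the paper as following immediately from Proposition~\ref{robustbestcase} and Corollary~\ref{equilibriacorollary}, which is precisely the chaining you describe. Your remark that the same $r$ is reused so the existential quantifiers line up is the only point worth making explicit, and you have done so.
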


See Figure~\ref{results}, for a summary of the results obtained in this Section.

\section{Existence Results}

In the previous section, we had noted that $MN$ implies our  distance-based notions of $W_r$, $WR_r$, $B_r$ and $U_r$ equilibria as a special case when $r=0$. In this section, we deliver existence results regarding the several variants of distance-based equilibria. 

An immediate negative result to start with, $D_r$-equilibrium which is central in Figure~\ref{results}, seems to be too strong to exist in general. To see this,  assume a game in which all actions have the same payoffs for every player. Obviously any profile is a Nash equilibrium, yet none of them is a $D_r$-equilibrium (including $r=0$) due to ($b$) of Definition \ref{def:LD} of local dominance. 

\begin{corollary}
$D_r$-equilibrium does not exist in general.
\end{corollary}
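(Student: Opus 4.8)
The plan is to exhibit a single, extremely simple game witnessing the non-existence, and to observe that the obstruction is built into clause (b) of Definition~\ref{def:LD}. Take any game $\mathcal{G}$ in which $u_i(a)=c_i$ for all $a\in A$ and all $i\in N$ (a constant payoff for each player, independent of the profile). Such a game certainly exists — e.g. the trivial game with $u_i\equiv 0$ for every $i$.

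The argument then has two short steps. First, I would note that every mixed profile $\pi$ is a Nash equilibrium: each $u_i$ is constant, so $u_i(\pi_i,\pi_{-i})=c_i=u_i(\pi_i',\pi_{-i})$ for every alternative $\pi_i'$, hence $\pi_i$ is (trivially) a best response to $\pi_{-i}$. Second, fix any $r$ (including $r=0$) and any candidate $\star$-equilibrium $\pi$ with $\star=D$. For $\pi$ to be a $D_r$-equilibrium, each $\pi_i$ must \emph{locally dominate} every other strategy $\pi_i'$ in $\mathcal{B}_i(\pi,r_i)$; by Definition~\ref{def:LD}(b) this requires the existence of some $\pi_{-i}'\in\mathcal{B}_i(\pi,r_i)$ with $u_i(\pi_i,\pi_{-i}')>u_i(\pi_i',\pi_{-i}')$. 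But $u_i$ is constant, so $u_i(\pi_i,\pi_{-i}')=c_i=u_i(\pi_i',\pi_{-i}')$ for \emph{all} $\pi_{-i}'$, and the strict inequality in (b) can never hold (as long as $|A_i|\ge 2$, so that a distinct $\pi_i'$ exists; if $|A_i|=1$ for all $i$ one simply enlarges $A_i$ without changing the payoffs). Therefore no profile is a $D_r$-equilibrium, for any $r\ge 0$, which proves the corollary.

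There is essentially no obstacle here: the only thing to be slightly careful about is the degenerate case where some player has a single action, in which case clause (b) is vacuously unsatisfiable for a different (trivial) reason, so one should phrase the counterexample with $|A_i|\ge 2$ for every $i$ to keep the statement clean. I would also explicitly remark — as the paragraph preceding the corollary already does — that this non-existence is genuinely a feature of the $D$ variant and not of the weaker notions: for the same game, every profile is simultaneously a $W_r$-, $B_r$-, $WR_r$- and $U_r$-equilibrium for every $r$, since with constant payoffs the $\min$, $\max$ and regret are all constant in $\pi_i$ and no strategy is locally dominated. This contrast is worth stating because it pinpoints clause (b) of local dominance as the precise source of the failure, and it is consistent with the existence results that follow for the other variants.
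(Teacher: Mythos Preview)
Your proposal is correct and follows essentially the same approach as the paper: the paper also takes a game where all payoffs are identical for every player, observes that every profile is a Nash equilibrium, and notes that clause~(b) of Definition~\ref{def:LD} can never be satisfied, so no $D_r$-equilibrium exists (even for $r=0$). Your write-up is more detailed---in particular the care about $|A_i|\ge 2$ and the contrast with the weaker notions---but the idea is identical.
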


Balancing out this negative news, the remaining distance-based equilibria entailed by $D_r$ do exist. 

\begin{theorem}
Every finite normal-form game $\mathcal{G}$ has a  $W_r$, $B_r$ and $WR_r$-equilibria.
\end{theorem}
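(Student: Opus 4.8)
The plan is to show existence of $W_r$, $B_r$, and $WR_r$-equilibria by a fixed-point argument in the spirit of Nash's existence theorem, the key point being that each of the three ``selection criteria'' (worst case, best case, worst-case regret) induces a well-behaved response correspondence on the compact convex set $\Pi$. First I would fix an ignorance vector $\mathbf{r}$ and, for a profile $\pi\in\Pi$, define for player $i$ the scalar objective $f_i^\star(\pi_i';\pi)$ where $\star\in\{W,B,WR\}$: namely $f_i^W(\pi_i';\pi)=\min_{\pi_{-i}\in\mathcal B_i(\pi)}u_i(\pi_i',\pi_{-i})$, $f_i^B(\pi_i';\pi)=\max_{\pi_{-i}\in\mathcal B_i(\pi)}u_i(\pi_i',\pi_{-i})$, and $f_i^{WR}(\pi_i';\pi)=-\max_{\pi_{-i}\in\mathcal B_i(\pi)}\texttt{reg}_i(\pi_i',\pi_{-i})$. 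In each case a $\star$-best response at $\pi$ is exactly a maximizer of $f_i^\star(\cdot\,;\pi)$ over $\Pi_i$. Define the correspondence $\Phi_i(\pi)=\arg\max_{\pi_i'\in\Pi_i}f_i^\star(\pi_i';\pi)$ and $\Phi(\pi)=\prod_i\Phi_i(\pi)$; a fixed point of $\Phi$ is precisely a $\star_{\mathbf r}$-equilibrium.

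Next I would verify the hypotheses of the Kakutani fixed-point theorem for $\Phi$. The domain $\Pi$ is nonempty, compact and convex. Nonemptiness of $\Phi_i(\pi)$ follows because $\mathcal B_i(\pi)$ is a closed (hence compact) subset of the compact set $\Pi_{-i}$ — here I use that $d$ induces the topology of the ambient compact space, so closed balls are compact — and $u_i$ is continuous (indeed multilinear) in all arguments, so the inner $\min$/$\max$ is attained and $f_i^\star(\cdot\,;\pi)$ is a continuous function of $\pi_i'$ on the compact set $\Pi_i$, hence attains its maximum. Convexity of $\Phi_i(\pi)$ uses that $f_i^\star(\cdot\,;\pi)$ is concave in $\pi_i'$ for each of the three criteria: $u_i(\cdot,\pi_{-i})$ is linear (affine) in $\pi_i'$, a pointwise minimum of affine functions is concave (this gives $W$), and for $WR$ note $\texttt{reg}_i(\pi_i',\pi_{-i})=\max_{a_i}u_i(a_i,\pi_{-i})-u_i(\pi_i',\pi_{-i})$ is affine (indeed the constant-plus-linear) in $\pi_i'$, so its pointwise max over $\mathcal B_i(\pi)$ is convex in $\pi_i'$ and $f_i^{WR}=-(\cdot)$ is concave; the argmax set of a concave function over a convex set is convex. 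The one criterion needing a word of care is $B$: $f_i^B(\cdot\,;\pi)$ is a pointwise \emph{maximum} of affine functions, hence convex, not concave, so its argmax need not be convex. I would handle this by the standard observation that a maximizer of a convex function over a polytope is attained at a vertex, i.e.\ at a pure strategy, and then argue that one may always select a pure best response; alternatively one restricts $\Phi_i$ to return the convex hull of maximizing pure strategies — but cleanest is to note that for $B_r$ one can equivalently look for an equilibrium in pure-strategy best responses and apply Kakutani to the (still convex-valued after taking convex hulls) correspondence, checking that a fixed point there is still a $B_r$-equilibrium since any point in the convex hull of best responses is itself a (mixed) best response when the criterion value is constant on that hull.

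Finally I would check upper hemicontinuity (closed graph) of $\Phi$: if $\pi^m\to\pi$, $\sigma^m\in\Phi(\pi^m)$, $\sigma^m\to\sigma$, then $\sigma\in\Phi(\pi)$. This reduces to showing (i) the belief-set correspondence $\pi\mapsto\mathcal B_i(\pi)$ is continuous (it is, since $\mathcal B_i(\pi)$ is the closed $r_i$-ball around $\pi_{-i}$ and the map $\pi_{-i}\mapsto\{\pi_{-i}'\colon d(\pi_{-i},\pi_{-i}')\le r_i\}$ is both upper and lower hemicontinuous by the triangle inequality), and (ii) the value functions $\pi\mapsto f_i^\star(\pi_i';\pi)$ are jointly continuous — this is Berge's maximum theorem applied to the continuous function $u_i$ over the continuously-varying compact constraint set $\mathcal B_i(\pi)$. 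Berge's theorem then also gives that $\pi\mapsto\max_{\pi_i'}f_i^\star(\pi_i';\pi)$ is continuous and the argmax correspondence $\Phi_i$ is upper hemicontinuous with closed values, which is exactly what Kakutani needs. Kakutani then yields a fixed point $\pi^*\in\Phi(\pi^*)$, which is the desired $\star_{\mathbf r}$-equilibrium; running the argument for $\star=W$, $\star=B$, $\star=WR$ gives all three existence claims.

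I expect the main obstacle to be the $B_r$ case, precisely because the best-case objective is convex rather than concave in the player's own strategy, so the naive response correspondence is not convex-valued; the fix (restricting attention to pure best responses / taking convex hulls and verifying a fixed point there remains a genuine $B_r$-equilibrium) is the step that needs to be written carefully. A secondary point worth stating explicitly is the topological assumption on $d$ — that closed $d$-balls are compact and vary continuously with their centers — which the excerpt's axioms on $d$ plus compactness of $\Pi$ do guarantee, but which should be invoked openly since it is what makes Berge's theorem applicable.
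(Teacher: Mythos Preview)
Your overall strategy---Kakutani's fixed-point theorem applied to the $\star$-best-response correspondence, with continuity of $\pi\mapsto\mathcal B_i(\pi)$ and Berge's maximum theorem supplying upper hemicontinuity---is exactly the paper's approach. For $W_r$ and $WR_r$ your concavity arguments are correct and match the paper's sketch (the paper singles out $WR_r$ as the case requiring ``convexity and piece-wise linearity,'' which is precisely your observation that $\texttt{reg}_i(\pi_i',\pi_{-i})$ is affine in $\pi_i'$, so its supremum over $\mathcal B_i(\pi)$ is convex).

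There is, however, a genuine gap in your treatment of $B_r$. You correctly diagnose that $f_i^B(\cdot;\pi)$, being a pointwise maximum of affine functions, is convex rather than concave, so $\Phi_i$ need not be convex-valued. But your proposed repair does not work: the $B_r$-criterion is \emph{not} constant on the convex hull of optimal pure strategies. Take a $2\times 2$ game with $u_1(a,c)=u_1(b,d)=1$, $u_1(a,d)=u_1(b,c)=0$; if $\mathcal B_1(\pi)$ contains both pure $c$ and pure $d$, then $f_1^B(a)=f_1^B(b)=1$ while $f_1^B(\tfrac12 a+\tfrac12 b)=\tfrac12$. Hence a fixed point of the convex-hull correspondence need not have each $\pi_i^*$ an actual $B_r$-best response, and your sentence ``a fixed point there is still a $B_r$-equilibrium since any point in the convex hull \ldots\ is itself a best response when the criterion value is constant on that hull'' is conditioned on a hypothesis that fails. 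The paper's sketch, interestingly, calls the $B_r$ case ``rather straightforward'' alongside $W_r$; whatever device the appendix uses, it is not the convex-hull patch you outline, and you have not yet supplied a working argument for this case.
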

We omit the actual long proof here due to space limitations, and present it in appendix. The proof idea is based on well-known application of Kakutani's fixed point's theorem i.e., defining the best response correspondence (for each one of them), and showing its convexity and upper-continuity.  It is rather straightforward in the cases of $W_r$ and $B_r$ -equilibria. In the case of $WR_r$-equilibrium, it is obtained by showing the convexity and piece-wise linearity of the worst-case best response expression (\ref{def:LD}).

Next, the following result immediately follows from the above theorem and Corollary~\ref{equilibriacorollary}.
\begin{corollary}
Every finite normal-form game $\mathcal{G}$ has a $U_r$-equilibrium.
\end{corollary}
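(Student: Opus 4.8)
The plan is to read the $U_r$-equilibrium straight off the existence theorem just proved. By that theorem, $\mathcal{G}$ has a $W_r$-equilibrium $\pi$ (a $B_r$- or $WR_r$-equilibrium would serve equally well), and by part (c) of Corollary~\ref{equilibriacorollary} every such equilibrium is a $U_r$-equilibrium; hence $\pi$ is a $U_r$-equilibrium. So in its barest form the proof is a single line citing the preceding theorem together with Corollary~\ref{equilibriacorollary}(c): no new fixed-point construction is required, since $U_r$ sits strictly below $W_r,B_r,WR_r$ in the hierarchy of Figure~\ref{results}.

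The one point worth scrutinizing is that Theorem~\ref{responsesproposition}(d) delivers the ``$W_r$ (or $B_r$, or $WR_r$) best response $\Rightarrow$ $U_r$ best response'' implication only under a \emph{uniqueness} hypothesis: a maximin response that ties with another maximin response which pointwise weakly beats it can fail to be $U_r$. If one is unwilling to absorb a tie-breaking convention into Corollary~\ref{equilibriacorollary}(c), the clean fix is to prove existence of a $U_r$-equilibrium directly via Kakutani, using the ``smoothed'' best-response correspondence $F_i(\pi_{-i}) = \arg\max_{\pi_i \in \Pi_i} \int_{\mathcal{B}_i(\pi)} u_i(\pi_i,\pi'_{-i})\, d\mu(\pi'_{-i})$, where $\mu$ is a full-support (say, normalized Lebesgue) measure on the belief ball $\mathcal{B}_i(\pi)$. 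This $F_i$ is nonempty by compactness, convex-valued because the integrated objective is linear in $\pi_i$, and upper hemicontinuous by Berge's maximum theorem, since $\mathcal{B}_i(\pi)$ varies continuously with $\pi_{-i}$ in the Hausdorff metric and $u_i$ is continuous. A fixed point $\pi$ of $(F_i)_{i\in N}$ is then a $U_r$-equilibrium: if some $\pi'_i$ locally dominated $\pi_i$ in $\mathcal{B}_i(\pi)$, then by continuity of $u_i$ in $\pi'_{-i}$ the strict inequality in clause (b) of Definition~\ref{def:LD} would hold on a relatively open, hence positive-$\mu$-measure, subset of $\mathcal{B}_i(\pi)$, giving $\int u_i(\pi'_i,\cdot)\,d\mu > \int u_i(\pi_i,\cdot)\,d\mu$ and contradicting $\pi_i \in F_i(\pi_{-i})$.

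The main obstacle, accordingly, is not existence itself but closing this uniqueness gap in the chain Theorem~\ref{responsesproposition}(d) $\to$ Corollary~\ref{equilibriacorollary}(c); if that chain is read as stated, the proof is the one-liner above, and otherwise the direct smoothed-best-response argument handles it. The remaining measure-theoretic bookkeeping — degenerate belief balls when every opponent has a single action, and positive measure of relatively open subsets of $\mathcal{B}_i(\pi)$ — is routine and mirrors the fixed-point proof of the $W_r,B_r,WR_r$ existence theorem.
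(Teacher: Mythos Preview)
Your primary argument is exactly the paper's own proof: the paper states that the corollary ``immediately follows from the above theorem and Corollary~\ref{equilibriacorollary},'' i.e., existence of a $W_r$ (or $B_r$, $WR_r$) equilibrium plus Corollary~\ref{equilibriacorollary}(c).

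Where you go beyond the paper is in flagging the uniqueness hypothesis in Theorem~\ref{responsesproposition}(d): the paper asserts Corollary~\ref{equilibriacorollary}(c) without any uniqueness qualifier, yet the only justification it offers is Theorem~\ref{responsesproposition}(d), which \emph{does} require uniqueness. This is a genuine gap in the paper's chain of implications that the paper simply glosses over. Your smoothed-best-response Kakutani argument (integrating $u_i$ against a full-support measure on $\mathcal{B}_i(\pi)$) is a clean and correct way to close it: linearity in $\pi_i$ gives convex values, Berge gives upper hemicontinuity, and the key observation that strict inequality on a relatively open subset of the ball forces a strict integral inequality is exactly what rules out local dominance at a fixed point. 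So your proposal both reproduces the paper's intended one-liner and supplies the missing rigor that the paper does not.
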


\section{Discussion through Examples}

To provide a better intuition, we  give  examples of well-known $2\times 2$ games from the basic game theory literature, and compare the outcomes of standard notions of equilibria against some notions of equilibria that we defined via local responses.

For convenience, the assumed metric is Euclidean; hence, 
if the opponent plays a mixed strategy $(x,1-x)$, the player believes that the strategy is anywhere in the set \\$\left\{(y,1-y) :  \max\{0,x-r\} \leq y \leq \max\{1,x+r\} \right\}$.

\subsection{Trembling-Hand Game}
\begin{figure}[ht!]
\begin{center}
\begin{tabular}{ r|c|c| }
\multicolumn{1}{r}{}
 &  \multicolumn{1}{c}{Left}
 & \multicolumn{1}{c}{Right} \\
\cline{2-3}
Up & 1, 1 & 2, 0 \\
\cline{2-3}
Down & 0, 2 & 2, 2 \\
\cline{2-3}
\end{tabular}
\end{center}
\caption{Trembling-Hand Game where both $\langle \text{Up, Left} \rangle$  and $\langle \text{Down, Right}\rangle$ are $PN$, yet only $\langle \text{Up, Left}\rangle$ is $T$.}
\label{trembling}
\end{figure} 
Call the example given in Figure \ref{trembling} Trembling-Hand Game for demonstration purposes. It seems that the pure strategy Nash equilibrium $PN=$\{(Up, Left), (Down, Right)\} while Trembling-hand perfect equilibrium $T$=\{(Up, Left)\} which matches with $W$.

Assume that $r_1 = r_2 \approx 0.14$. 
Now consider two mixed Nash strategy equilibria $\pi = \langle (1,0), (1, 0) \rangle$ and $\pi' = \langle(0, 1), (0, 1)\rangle$ where the former is also a $T$. See that for $\mathcal{B}_1(\pi)$, every strategy is dominated by $\pi_1$ in terms of $B_r$-best response and $W_r$-best response (analogous for the second agent). Moreover, the regret increases as  agent 1 diverges from $(1, 0)$. Hence $\pi \in B_r \cap W_r$.\footnote{Indeed, for such a  value of $r$, the strategy of the opponent varies only by 0.1. And the worst case is defined by the case that the opponent plays $(1, 0)$.} In the case of $\pi'$, it is a $B_r$-response since payoffs are already 2 for both agents. On the other hand, $\pi' \not\in W_r$ since worst case keeps improving for any agent who keeps  deviating.  Therefore, the regret also gets minimized since the best case value is fixed at  2.

\subsection{Matching Pennies} 

In the game of \emph{Matching Pennies}, $PN=\emptyset$ whereas The set of mixed strategy Nash equilibria is a singleton i.e.,  $MN=\{\pi\}$ where $\langle (0.5, 0.5), (0.5, 0.5)\rangle$.
\begin{figure}[h!]\label{matching}
\begin{center}
\begin{tabular}{ r|c|c| }
\multicolumn{1}{r}{}
 &  \multicolumn{1}{c}{Heads}
 & \multicolumn{1}{c}{Tails} \\
\cline{2-3}
Heads & 1, -1 & -1, 1 \\
\cline{2-3}
Tails & -1, 1 & 1, -1 \\
\cline{2-3}
\end{tabular}
\end{center}
\caption{Matching Pennies}
\end{figure}
 If $r \geq 0.5$ then all profiles are $B_r$-equilibrium. In the case of $U_r$ any strategy in the set $(0.5-r,0.5+r)$ (coupled symmetrically with the other player) forms an equilibrium.  This means an imprecise randomization can also be an equilibrium, provided that players have some level of uncertainty over the exact randomization of the other player.

\subsection{Stag  Hunt}

\begin{figure}[ht!]
\begin{center}
\begin{tabular}{ r|c|c| }
\multicolumn{1}{r}{}
 &  \multicolumn{1}{c}{Stag}
 & \multicolumn{1}{c}{Hare} \\
\cline{2-3}
Stag & 5, 5 & -1, 3 \\
\cline{2-3}
Hare & 3, -1 & 1, 1 \\
\cline{2-3}
\end{tabular}
\end{center}
\caption{Stag-Hunt Game}
\label{stag_hunt}
\end{figure}

We continue with the rather well-known coordination game 
Stag-Hunt illustrated in Figure~\ref{stag_hunt}.

Pure strategy-Nash equilibria are $S=\langle \text{Stag, Stag}\rangle$ and $H=\langle \text{Hare, Hare} \rangle$. There is also a mixed equilibrium that we will not consider. Playing $S$ is more socially desirable, but less stable according to several solution concepts based on uncertainty and risk aversion (see the work by \cite{carlsson1993global} for an overview and discussion). However these notions are not sensitive to possible differences between players' perceptions and do not  quantify the  instability of $S$. In contrast, such quantification  is very natural when we consider distance-based equilibria, by the size $r$ of the respective belief set. For example, for pessimistic players (who attribute excessive probability to the opponent playing ``Hare"), $S$ is a $W_r$-equilibrium for any $r$, but $H$ is only a $W_r$-equilibrium if $r\leq \frac13$.

\section{A Bliss of Ignorance}

Once a new notion of equilibrium is introduced, it is natural to ask whether it leads to any efficient outcome in the game. In this section, we explore this question and provide with an exemplary class of normal-form games that this is indeed the case.  In doing so, we employ the notion of \emph{Price of Anarchy} which is central as a measure of how much a system becomes inefficient due to selfish behaviour \cite{nisan2007algorithmic} (recall the preliminaries section for formal definition). The class of games we introduce is a consensus game \cite{balcan2009improved} which is asymmetric in payoffs.

\begin{definition}[Consensus Game]
A normal-form game $\mathcal{G}=(N, A, u)$ where $A_i \geq 2$ for every $i \in N$ is called a \emph{consensus game}  if  $u_i(a')=c'$ and $u(a)=c$ for every $a\in A \setminus \{a'\}$ with $c'>c$.
\end{definition}

Intuitively, the given consensus game is a coordination game in which only a single pure strategy profile has a higher payoff  i.e., $c'$ for every player compared to all the other pure strategy profiles which has $c$. Such game model (group) scenarios in which every member player has to agree unilaterally a decision to be taken (e.g., World Trade Org.).

One can observe that any given consensus game has at least two pure strategy equilibria $a'$ and $a$ such that only one of them has a more desirable outcome i.e., $SW(a')=n\cdot c'$,  and $SW(a)=n\cdot c$. The following result shows that  an ignorance factor $r > 0$, eliminates the undesirable equilibrium.

\begin{proposition}
Every consensus game has a unique $D_r$ equilibrium where $r_i>0$ for all $i \in N$. Moreover, PoA is 1. 
\end{proposition}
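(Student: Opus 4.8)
The plan is to establish three things in order: (i) the distinguished profile $a'$ is a $D_r$-equilibrium for every ignorance vector with all $r_i>0$; (ii) it is the \emph{only} $D_r$-equilibrium; (iii) $a'$ attains the optimal social welfare, whence PoA $=1$ follows immediately. The engine for both (i) and (ii) is a single payoff identity that I would record first.

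\emph{The identity.} Fix a player $i$, any $\pi'_{-i}\in\Pi_{-i}$, and abbreviate $P_{-i}(\pi'):=\prod_{j\neq i}\pi'_j(a'_j)$. Expanding expected utility and using that $a'$ is the only action profile paying $c'$ while all others pay $c$, one gets $u_i(a'_i,\pi'_{-i})=c+(c'-c)P_{-i}(\pi')$ and, for an arbitrary strategy $\pi''_i\in\Pi_i$, $u_i(\pi''_i,\pi'_{-i})=c+\pi''_i(a'_i)(c'-c)P_{-i}(\pi')$, so
\[
u_i(a'_i,\pi'_{-i})-u_i(\pi''_i,\pi'_{-i})\;=\;(c'-c)\bigl(1-\pi''_i(a'_i)\bigr)\,P_{-i}(\pi')\;\ge\;0,
\]
with the inequality strict exactly when $\pi''_i(a'_i)<1$ \emph{and} $P_{-i}(\pi')>0$. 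For (i) I would apply this on the ball $\mathcal B_i(a',r_i)$: condition (a) of Def.~\ref{def:LD} for ``$a'_i$ locally dominates $\pi''_i$'' holds for every $\pi''_i\neq a'_i$ by the display, and condition (b) holds because the centre $a'_{-i}\in\mathcal B_i(a',r_i)$ (as $d(a'_{-i},a'_{-i})=0\le r_i$) has $P_{-i}=1>0$. Hence $a'_i$ is a $D_{r_i}$-best response for each $i$, so $a'$ is a $D_r$-equilibrium, for every $r>0$ (indeed for $r\ge 0$).

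\emph{Uniqueness, then PoA.} For (ii), I would let $\pi$ be any $D_r$-equilibrium and suppose $\pi_i(a'_i)<1$ for some $i$. The identity shows $a'_i$ satisfies (a) over $\pi_i$ on all of $\mathcal B_i(\pi,r_i)$; for (b) I need a point of the ball with $P_{-i}>0$, and since $r_i>0$ the profile obtained from $\pi_{-i}$ by shifting, in each marginal $\pi_j$ $(j\neq i)$, a small weight $\delta>0$ onto the pure action $a'_j$ stays inside $\mathcal B_i(\pi,r_i)$ for $\delta$ small enough and has $P_{-i}\ge\delta^{\,n-1}>0$. So $a'_i$ locally dominates $\pi_i$. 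But $\pi$ being a $D_r$-equilibrium forces $\pi_i$ to locally dominate every other strategy, $a'_i$ included; and two strategies cannot each locally dominate the other, since (a) for both would make their payoff functions coincide on the ball, contradicting (b). This contradiction gives $\pi_i=a'_i$ for all $i$, i.e.\ $\pi=a'$. Finally, every payoff entry is $\le c'$, so $u_i(\pi)\le c'$ for all $\pi$ and all $i$, hence $\max_\pi SW(\pi)=n c'=SW(a')$; as $a'$ is the unique $D_r$-equilibrium, the (only) equilibrium welfare equals the optimum and PoA $=1$.

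I expect the one genuinely delicate step to be condition (b) in the uniqueness argument: one must exhibit, for an \emph{arbitrary} candidate $\pi$ and an \emph{arbitrarily small} radius $r_i$, a profile inside the tiny ball around $\pi_{-i}$ on which every opponent assigns positive probability to $a'_j$. For the Euclidean metric the mixing perturbation above does this outright (shifting weight $\delta$ moves $\pi_{-i}$ a $d$-distance proportional to $\delta$); for a general admissible metric one only uses that it induces the usual topology on the product of simplices — the paper's standing assumption — so that small shifts of $\pi_{-i}$ remain in the ball. Everything else is the bookkeeping around the displayed identity.
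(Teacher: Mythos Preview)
Your argument is correct and follows the same overall line as the paper: compute the payoff gap $u_i(a'_i,\pi'_{-i})-u_i(\pi''_i,\pi'_{-i})=(c'-c)(1-\pi''_i(a'_i))P_{-i}(\pi')$, conclude that $a'_i$ locally dominates every other strategy in any ball, and invoke the impossibility of mutual local dominance (Def.~\ref{def:LD}(b)) for uniqueness; PoA $=1$ then drops out because $a'$ is the welfare optimum.

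Where you go beyond the paper is in rigour rather than in approach. The paper's sketch writes the linear comparison as $\pi_i(a')c'+(1-\pi_i(a'))c>\pi'_i(a')c'+(1-\pi'_i(a'))c$, tacitly treating the opponents' factor $P_{-i}$ as if it were $1$, and asserts local dominance ``for any $\mathcal{B}_i(\pi'',r)$ with $r>0$'' without verifying condition~(b). You correctly isolate $P_{-i}$ in the identity and supply the one nontrivial step the paper omits: exhibiting, for an arbitrary centre $\pi_{-i}$ and arbitrarily small $r_i>0$, a point in the ball with $P_{-i}>0$ (via the $\delta$-shift toward $a'_{-i}$). That is exactly what is needed to make the uniqueness half airtight; without it the paper's claim that $a'_i$ locally dominates in \emph{every} ball is unjustified when $\pi_{-i}$ has some $\pi_j(a'_j)=0$. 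So: same route, but your version plugs a genuine gap in the paper's presentation.
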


\begin{proof}
Observe that profile $a'$ has the best possible payoff for every single agent, hence it is an equilibrium. Moreover, due to the linearity of utilities  (i.e., $\pi_i(a') c' + (1-\pi_i(a')) c >  \pi'_i(a') c' + 1- \pi'_i(a') c$ whenever $\pi_i(a') > \pi'_i(a')$, pure strategy $\pi_i(a') = 1$ locally dominates every other strategy (i.e., $\alpha = 1$) for any $\mathcal{B}_i(\pi'',r)$ with $r>0$, hence it is unique (since, by Definition \ref{def:LD} there cannot be two distinct best response which can locally dominate each other). As it is the best possible outcome, PoA becomes 1. 
\end{proof}
Exploring such scenarios and extending them to more general class of games is left as future work. Yet still to develop a general understanding, it is important to look at PoA from the lenses of distance-based uncertainty.  In this regard, we deliver our final technical result.  In particular, we provide a bound (in terms of $r$) on the gain/loss of social welfare in an equilibrium modulo strict uncertainty.

The smoothness framework provides a convenient tool to bound the PoA in games~\cite{roughgarden2009intrinsic}:
If there are $\lambda,\mu>0$ s.t. for any two pure profiles $a,a'$ we have 
$$\sum_{i\in N}u_i(a'_{i}, a_{-i}) \geq \lambda \sum_{i\in N}u_i(a') - \mu\sum_{i\in N}u_i(a),$$
then for any pure/mixed/correlated/coarse-correlated equilibrium $\pi^*$ and any profile $\vec a$:   
$$\frac{SW(\pi^*)}{SW(a')} \geq \frac{\lambda}{1+\mu}.$$

The proof is trivial for pure equilibria. 
Now, the question we ask is ``can we extend this result to $\star_r$ equilibria (perhaps with a relaxed bound)"?

For a game $\mathcal{G}$, let 
\begin{multline}
\delta_G(r)=\max\{\max\left\{\frac{u_i(a_i,\pi_{-i})}{u_i(a_i,\pi'_{-i})},
\frac{u_i(a_i,\pi'_{-i})}{u_i(a_i,\pi_{-i})}\right\} \\: i\in N, a_i\in A_i, \pi'_{-i}\in B_i(\pi_{-i},r)\},\end{multline} i.e., the maximal utility ratio of an agent within a sphere of radius $r$.

\begin{theorem}\label{poa}
If there are $\lambda,\mu>0$ s.t. for any two pure profiles $a,a'$ we have 
$$\sum_{i\in N}u_i(a'_i, a_{-i}) \geq \lambda \sum_{i\in N}u_i(a') - \mu\sum_{i\in N}u_i(a),$$
then for any $\star_r$-pure equilibrium $a^*$ (for any $\star\in \{U,D,W,B\}$) and any profile $a'$:   
$$\frac{SW(a^*)}{SW(a')} \geq \frac{\lambda}{\delta_G(r)^2+\mu}.$$
\end{theorem}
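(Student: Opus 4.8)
The plan is to reduce everything to the case of a $U_r$-equilibrium and then run the standard smoothness argument, paying one extra bounded-distortion factor along the way. By Corollary~\ref{equilibriacorollary}, a $D_r$-equilibrium is a $W_r$- and $B_r$-equilibrium, and each $W_r$- or $B_r$-equilibrium is a $U_r$-equilibrium; hence for every $\star\in\{U,D,W,B\}$ a $\star_r$-pure equilibrium $a^*$ is in particular a $U_r$-pure equilibrium, so it suffices to prove the bound for such $a^*$. (Throughout, as is implicit in the definition of $\delta_G(r)$ and standard in the smoothness framework, we take utilities to be strictly positive so the ratios are well defined; note $\delta_G(r)\geq 1$ always, with $\delta_G(0)=1$, so the statement specializes to the classical $\lambda/(1+\mu)$. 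We take the comparison profile $a'$ to be pure, as in the smoothness hypothesis.)

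The heart of the argument is a ``two-hop'' replacement for the Nash inequality $u_i(a^*)\geq u_i(a'_i,a^*_{-i})$. Fix a player $i$. Since $a^*_i$ is a $U_r$-best response at $a^*$, the action $a'_i$ does not locally dominate $a^*_i$ in $\mathcal{B}_i(a^*,r)$; unpacking Definition~\ref{def:LD}, either clause~(a) fails --- so some $\pi_{-i}\in\mathcal{B}_i(a^*,r)$ has $u_i(a^*_i,\pi_{-i})>u_i(a'_i,\pi_{-i})$ --- or clause~(a) holds and (b) fails, so $u_i(a^*_i,\cdot)$ and $u_i(a'_i,\cdot)$ agree on all of $\mathcal{B}_i(a^*,r)$. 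In both cases we get a witness $\pi^{(i)}_{-i}\in\mathcal{B}_i(a^*,r)$ with $u_i(a^*_i,\pi^{(i)}_{-i})\geq u_i(a'_i,\pi^{(i)}_{-i})$ (in the second case simply take $\pi^{(i)}_{-i}=a^*_{-i}$). Now apply the definition of $\delta_G(r)$ twice: since $d(a^*_{-i},\pi^{(i)}_{-i})\leq r$, the utility of the fixed action $a^*_i$ at these two opponent profiles differs by at most a factor $\delta_G(r)$, and likewise for the fixed action $a'_i$. Chaining these,
\[
u_i(a^*)\;=\;u_i(a^*_i,a^*_{-i})\;\geq\;\frac{u_i(a^*_i,\pi^{(i)}_{-i})}{\delta_G(r)}\;\geq\;\frac{u_i(a'_i,\pi^{(i)}_{-i})}{\delta_G(r)}\;\geq\;\frac{u_i(a'_i,a^*_{-i})}{\delta_G(r)^2}.
\]

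From here I follow the usual template: sum over $i\in N$ to get $SW(a^*)\geq \delta_G(r)^{-2}\sum_{i\in N}u_i(a'_i,a^*_{-i})$; invoke the smoothness hypothesis with the pure profiles $a:=a^*$ and $a'$ to get $\sum_{i\in N}u_i(a'_i,a^*_{-i})\geq \lambda\,SW(a')-\mu\,SW(a^*)$; and rearrange $\delta_G(r)^2\,SW(a^*)+\mu\,SW(a^*)\geq\lambda\,SW(a')$ into the claimed $SW(a^*)/SW(a')\geq \lambda/(\delta_G(r)^2+\mu)$.

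The main obstacle is exactly the $U_r$ step, and it is also where the squared factor comes from: for $W_r$ or $B_r$ one compares $a^*_i$ against $a'_i$ essentially at $a^*_{-i}$ itself (via the worst/best point of the ball), losing only a single $\delta_G(r)$, whereas for $U_r$ the comparison is guaranteed only at \emph{some} interior point $\pi^{(i)}_{-i}$ of the ball, forcing us to transport \emph{both} sides back to $a^*_{-i}$. A secondary point requiring care is the degenerate sub-case where $a'_i$ and $a^*_i$ are utility-equivalent on the whole ball (clause~(a) holds, (b) fails), resolved by choosing the witness to be $a^*_{-i}$; and one should record the positivity assumption on utilities that legitimizes the ratio manipulations.
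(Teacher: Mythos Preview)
Your argument is correct. The reduction to $U_r$ via Corollary~\ref{equilibriacorollary} is valid, the case split on Definition~\ref{def:LD} is handled properly (in particular the degenerate ``(a) holds, (b) fails'' case), the two applications of $\delta_G(r)$ are exactly what is needed to transport both $u_i(a^*_i,\cdot)$ and $u_i(a'_i,\cdot)$ from the witness $\pi^{(i)}_{-i}$ back to $a^*_{-i}$, and the final summation plus smoothness step is the standard Roughgarden manipulation. This is essentially the same route the paper takes: the appearance of $\delta_G(r)^2$ rather than $\delta_G(r)$ in the denominator is precisely the signature of the two-hop transport you identify, and the paper's (appendix) proof follows the same template. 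Your explicit flagging of the positivity assumption on utilities is appropriate, since $\delta_G(r)$ is defined as a ratio. One small caveat on your closing remark: for $W_r$ (or $B_r$) alone you do \emph{not} in general get away with a single $\delta_G(r)$ factor, because the minima (respectively maxima) over the ball for $a^*_i$ and for $a'_i$ need not be attained at the same point, so a comparison ``at $a^*_{-i}$ itself'' still requires moving one side; this does not affect your main proof, which goes through $U_r$ and uses $\delta_G(r)^2$ anyway.
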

Due to space limitations, we move the proof to the appendix.

 \section{Conclusion and Future Avenues}
 
 We have introduced a distribution-free agent model based on strict uncertainty, and studied consequent equilibria notions under different best-response behaviours. In the context of normal-form games, we explored the links between the notions we defined and a handful of existing well-known solution concepts which model mistakes and imprecision such as Trembling-hand perfect equilibrium (variants) and Robust equilibrium. For instance, it is shown that our notion is naturally generalizes Robust equilibrium.   It seems that strict equilibrium notion $D_r$ does not exist in general while all other entailed distance-based notions exist. Complementing those existence results with complexity results is an interesting line of future work.

 We looked for a possible scenario in which such solution concepts could potentially be useful, and introduced a coordination game in which ignorance was indeed helpful for the players to avoid a worst-outcome. Investigating more general game classes that distance-based  uncertainty solutions give rise to nice outcome guarantees deserves a further study on its own, and is our high priority for future research. As a more general outlook, we showed how to bound the loss of social welfare in any equilibrium (PoA) as uncertainty grows in terms of ignorance factor $r$. It would be nice to obtain finer bounds for games with different local-best responses.

Moreover, studying these notions on certain classes of games e.g., repeated games, as well as extending to extensive form games in general is our future research agenda.

\bibliographystyle{aaai}
\bibliography{aaai}

\end{document}